\documentclass[11pt,onecolumn]{IEEEtran}
\usepackage[utf8]{inputenc}
\usepackage{mathtools}
\usepackage{amsmath}
\usepackage{amsthm}
\usepackage{amssymb}
\usepackage{geometry}
\usepackage[bookmarks=false,
 breaklinks=false,pdfborder={0 0 1},backref=false,colorlinks=false]
 {hyperref}

\makeatletter

\@ifundefined{date}{}{\date{}}

\usepackage{amsthm}\usepackage{bm}

\newcommand{\R}{\mathbb{R}}
\newcommand{\E}{\mathbb{E}}
\newcommand{\Cov}{\operatorname{Cov}}
\newcommand{\tr}{\operatorname{tr}}

\newcommand{\calK}{\mathcal K}

\makeatother

\theoremstyle{plain}
\newtheorem{thm}{\protect\theoremname}
\theoremstyle{definition}
\newtheorem{defn}{\protect\definitionname}
\theoremstyle{plain}
\newtheorem{conjecture}{\protect\conjecturename}
\theoremstyle{remark}
\newtheorem{rem}{\protect\remarkname}
\theoremstyle{plain}
\newtheorem{prop}{\protect\propositionname}
\newtheorem{lem}{\protect\lemmaname}
\providecommand{\conjecturename}{Conjecture}
\providecommand{\definitionname}{Definition}
\providecommand{\lemmaname}{Lemma}
\providecommand{\propositionname}{Proposition}
\providecommand{\remarkname}{Remark}
\providecommand{\theoremname}{Theorem}

\begin{document}
\title{Exact Common Information and Exact Channel Synthesis for Correlated
Gaussian Sources}
\author{Lei Yu\thanks{L. Yu is with the School of Statistics and Data Science, LPMC, KLMDASR,
and LEBPS, Nankai University, Tianjin 300071, China (e-mail: leiyu@nankai.edu.cn).
This work was supported by the National Key Research and Development
Program of China under grant 2023YFA1009604 and the NSFC under grant
62101286.} }
\maketitle
\begin{abstract}
In this paper, we resolve two conjectures posed by Yu and Tan in
2020 (in two separate papers published in the IEEE Trans. Inf. Theory).
Specifically, we establish that: 1) the exact common information for
a pair of $\rho$-correlated Gaussian sources is given by the conjectured
expression $\frac{1}{2}\log\frac{1+\rho}{1-\rho}+\frac{\rho}{1+\rho}$;
and 2) the admissible region for the shared randomness rate and the
communication rate in exact channel synthesis is exactly the conjectured
one. These results yield two important consequences. First, for any
$\rho>0$, the exact common information of a correlated Gaussian pair
strictly exceeds Wyner's common information. Second, for $\rho>0$,
the exact channel synthesis of such a pair requires strictly higher
rates than the total-variation version. The proof combines an exact
optimal-transport representation of the worst-case Gaussian cross-entropy,
Fathi's Gaussian transport inequality, and a determinant inequality
arising from the covariance structure of the conditional means.
\end{abstract}

\begin{IEEEkeywords}
Gaussian sources, Exact common information, Exact channel synthesis,
Communication complexity, Transport inequality
\end{IEEEkeywords}

\section{Introduction}

\subsection{Common information }

The common information problem asks for the minimum amount of common
randomness needed to generate two correlated random variables at two
separate terminals. In Wyner's formulation \cite{WynerCI}, Wyner
considered approximate generation in relative entropy and obtained
the common information 
\[
C_{{\rm Wyner}}(X;Y)=\inf_{W:\,X-W-Y}I(X,Y;W).
\]

The exact common information problem, introduced by Kumar, Li, and
El Gamal \cite{KLE2014}, is different. Here the synthesized distribution
must equal the target distribution exactly, rather than merely approach
it asymptotically in a weak metric. The common random variable is
first generated and then independently processed at the two terminals.
The exact common information is the minimum asymptotic rate of this
common randomness. 

Let $\pi_{XY}$ be a probability distribution on a standard Borel
product space. At blocklength $n$, an exact synthesis code consists
of a discrete random variable $W$ and conditional distributions $P_{X^{n}|W},P_{Y^{n}|W},$
such that 
\[
P_{X^{n}Y^{n}}=\sum_{w}P_{W}(w)P_{X^{n}|W=w}P_{Y^{n}|W=w}=\pi^{\otimes n}_{XY},
\]
where the conditional independence relation $X^{n}-W-Y^{n}$ holds. 

The exact common information is the asymptotic normalized entropy
of the smallest such common random variable. The variable-length formulation
is equivalent because for a prefix-free code, $H(W)\leq L(W)<H(W)+1,$
and therefore the difference is negligible after normalization by
$n$. Accordingly, 
\begin{equation}
C_{{\rm Exact}}(\pi_{XY})=\lim_{n\to\infty}\frac{1}{n}\inf\left\{ H(W):P_{X^{n}Y^{n}}=\pi^{\otimes n}_{XY},\ X^{n}-W-Y^{n}\right\} .\label{eq:exact-multiletter}
\end{equation}

A natural interpretation of exact common information can be described via the following thought experiment. Consider a random particle (or planet) that, at a certain instant, decomposes into multiple components. These components inherit shared common randomness and then evolve independently, governed jointly by this common randomness as well as their respective individual randomness. After some time has elapsed, given the observed joint distribution of the components, one may wish to estimate the amount of common randomness they possess. Or conversely, given the quantity of common randomness shared among them, we aim to characterize the set of feasible joint distributions these components can attain. This is exactly the common information problem. 

The exact common information is no smaller than Wyner's common information.
However, unlike Wyner's common information, the single-letter characterization
of exact common information is rarely known, except for doubly symmetric
binary sources (DSBSes) \cite{YuTan2020_exact} and a certain class
of sources satisfying $C_{\mathrm{Exact}}=C_{{\rm Wyner}}$ \cite{KLE2014,Vellambi2018,YuTan2020_exact}.
 As for Gaussian sources, the exact common information is still unknown. 

In addition, Yu and Tan \cite{YuTan2018,yu2020corrections,YuTan2020_exact}
introduced the notion of Rényi common information, which is defined
as the minimum common rate when the the relative entropy is replaced
by more general divergences---the family of Rényi divergences. The
family of Rényi common information unifies Wyner's common information
and exact common information since it includes them as two special
cases with the Rényi order equal to $1$ or $\infty$. 

\begin{figure}
\centering \setlength{\unitlength}{0.05cm} 
{ \begin{picture}(100,60) 
\put(5,30){\line(1,0){20}} \put(25,30){\vector(1,1){18}}
\put(25,30){\vector(1,-1){18}} \put(44,42){\framebox(30,12){$P_{X^{n}|W}$}}
\put(44,6){\framebox(30,12){$P_{Y^{n}|W}$}} \put(74,48){\vector(1,0){22}}
\put(74,12){\vector(1,0){22}} \put(10,33){%
\mbox{%
$W$%
}} \put(80,50){%
\mbox{%
$X^{n}$%
}} \put(80,14){%
\mbox{%
$Y^{n}$%
}} \end{picture}}

\label{fig:dss}

\caption{Distributed source synthesis.}
\end{figure}

\subsection{Distributed channel synthesis }

Common information has natural applications in distributed channel
synthesis \cite{Bennett02,Win02,cuff13,bennett14quantum,Harsha10}.
The latter problem, illustrated in Fig. \ref{fig:dcs}, refers to
the problem of determining the minimum communication rate required
to generate a bivariate source $\left\{ \left(X^{n},Y^{n}\right)\right\} _{n\in\mathbb{N}}$
 with $X^{n}$ generated at the encoder and $Y^{n}$ generated at
the decoder such that the induced joint distribution $P_{X^{n}Y^{n}}$
approximately or exactly equals $\pi^{\otimes n}_{XY}$ for all $n\in\mathbb{N}$.
In this paper, we focus on the exact synthesis, equivalently, requiring
$P_{X^{n}Y^{n}}=\pi^{\otimes n}_{XY}$. When there is no shared randomness,
the exact channel synthesis problem reduces to the exact common information
problem. 

Consider the distributed source simulation setup depicted in Fig.
\ref{fig:dcs}. A sender and a receiver share a uniformly distributed
source of randomness\footnote{For simplicity, we assume that $e^{nR_{0}}$ is integers.}
$K\sim\mathrm{Unif}(\calK),\calK:=[e^{nR_{0}}]$. The sender has access
to a memoryless source $X^{n}\sim\pi^{\otimes n}_{X}$ that is independent
of $K_{n}$, and wants to transmit information about the correlation
between correlated sources $\left(X^{n},Y^{n}\right)\sim\pi^{\otimes n}_{XY}$
to the receiver. Given the shared randomness and the correlation
information from the sender, the receiver generates a memoryless source
$Y^{n}\sim\pi^{\otimes n}_{Y|X}(\cdot|X^{n})$. Specifically, given
$X^{n}$ and $K$, the sender generates a ``message'' (i.e., a discrete
random variable) $M$ by a random mapping $P_{M|X^{n}K}$, and then
sends it to the receiver error free. Upon accessing to $K$ and receiving
$M$, the receiver generates a source $Y^{n}$ by a random mapping
$P_{Y^{n}|MK}$. The joint distribution induced by this code is 
\begin{align*}
 & P_{X^{n}KMY^{n}}:=P_{X^{n}}P_{K}P_{M|X^{n}K}P_{Y^{n}|MK}.
\end{align*}
Now we would like to determine the minimum amount of communication
such that $P_{X^{n}Y^{n}}=\pi^{\otimes n}_{XY}$ (or equivalently,
$P_{Y^{n}|X^{n}}=\pi^{\otimes n}_{Y|X}$). 

\begin{defn}
The admissible region of shared randomness rate and communication
rate for the exact channel synthesis problem is defined as 
\begin{align}
 & \mathcal{R}_{\mathrm{Exact}}(\pi_{XY})\nonumber \\
 & =\mathrm{cl}\bigcup_{n\ge1}\left\{ \begin{array}{l}
(R_{0},R):\exists(P_{M|X^{n}K},P_{Y^{n}|MK})\textrm{ s.t.}\\
\qquad P_{Y^{n}|X^{n}}=\pi^{\otimes n}_{Y|X},\\
\qquad R\ge\frac{1}{n}H(M|K)
\end{array}\right\} .\label{eq:-6}
\end{align}
\end{defn}
In contrast to exact channel synthesis, total variation (TV)-approximate
synthesis only requires the TV distance between the empirical distribution
$P_{X^{n}Y^{n}}$ and the target distribution $\pi^{n}_{XY}$ to vanish
asymptotically. Early studies by Bennett et al. \cite{Bennett02}
and Winter \cite{Win02} investigated exact and TV-approximate channel
synthesis under the assumption of unlimited shared encoder--decoder
randomness, proving that the minimal asymptotic communication rate
for both synthesis schemes equals the mutual information $I(X;Y)$
of the target distribution $(X,Y)\sim\pi_{XY}$.

Subsequent research explored the fundamental tradeoff between communication
rate and shared randomness rate in TV-approximate synthesis. Cuff
\cite{cuff13} and Bennett et al. \cite{bennett14quantum} characterized
this rate tradeoff, while Harsha et al. \cite{Harsha10} adopted a
rejection sampling framework to analyze one-shot exact synthesis for
discrete sources, establishing a bounded shared randomness cost with
a mild increment in expected description length. Li and El Gamal \cite{LiElgamal2018}
further refined the finite shared randomness upper bound for discrete
settings with a negligible communication rate penalty. 

Most prior work focused on TV-approximate synthesis or exact synthesis
under extreme randomness conditions, except for \cite{YuTan2020b}.
Yu and Tan \cite{YuTan2020b}  fully characterized the optimal tradeoff
between shared randomness rate and communication rate for exact synthesis
of the doubly symmetric binary source (DSBS), and verified that exact
synthesis requires a strictly higher communication rate than its TV-approximate
counterpart. This is the first example for which the optimal rate
tradeoff is explicitly known. 

\begin{figure*}
\centering \setlength{\unitlength}{0.06cm} { \begin{picture}(140,35)
\put(-5,10){\vector(1,0){30}} \put(-10,13){%
\mbox{%
$X^{n}\sim\pi^{\otimes n}_{X}$%
}} \put(25,4){\framebox(30,12){$P_{M|X^{n}K}$}} \put(55,10){\vector(1,0){30}}
\put(65,13){%
\mbox{%
$M$%
}} \put(85,4){\framebox(30,12){$P_{Y^{n}|MK}$}} \put(115,10){\vector(1,0){20}}
\put(120,13){%
\mbox{%
$Y^{n}\sim\pi^{\otimes n}_{Y|X}(\cdot|X^{n})$%
}} \put(40,27){\vector(0,-1){11}} \put(-5,30){%
\mbox{%
$K\sim\mathrm{Unif}[e^{nR_{0}}]$%
}} \put(100,27){\vector(0,-1){11}} \put(-5,27){\line(1,0){105}}
\end{picture}}

\caption{Exact channel synthesis. }\label{fig:dcs}
\end{figure*}

\subsection{The Gaussian setting }

Consider the standard bivariate Gaussian distribution 
\[
\pi_{\rho}=\mathcal{N}(\mathbf{0},\Sigma_{\rho}),
\]
where $0\leq\rho<1$ and 
\[
\Sigma_{\rho}=\begin{pmatrix}1 & \rho\\
\rho & 1
\end{pmatrix}.
\]

For this distribution, Wyner's common information is 
\[
C_{{\rm Wyner}}(\pi_{\rho})=\frac{1}{2}\ln\frac{1+\rho}{1-\rho}.
\]
Indeed, it is achieved by the Gaussian decomposition 
\[
W\sim\mathcal{N}(0,\rho),\qquad X=W+N_{X},\qquad Y=W+N_{Y},
\]
where 
\[
N_{X},N_{Y}\sim\mathcal{N}(0,1-\rho)
\]
are independent of each other and of $W$.

As already mentioned, the exact common information is always at least
Wyner's common information. Yu and Tan \cite{YuTan2020_exact} proved
the upper bound 
\begin{equation}
C_{{\rm Exact}}(\pi_{\rho})\leq\frac{1}{2}\ln\frac{1+\rho}{1-\rho}+\frac{\rho}{1+\rho}.\label{eq:UB}
\end{equation}
They also proved that the exact common information equals the $\infty$-Rényi
common information for this Gaussian source. They conjectured this
upper bound is tight. 
\begin{conjecture}[\cite{YuTan2020_exact}]
\label{conj:ECI} For every $0\leq\rho<1$, 
\[
C_{{\rm Exact}}(\pi_{\rho})=\frac{1}{2}\ln\frac{1+\rho}{1-\rho}+\frac{\rho}{1+\rho}.
\]
\end{conjecture}

The importance of this question is that the second term $\frac{\rho}{1+\rho}$
is strictly positive whenever $\rho>0$. Hence the conjecture would
show that exact synthesis requires strictly more common randomness
than Wyner's approximate synthesis in the Gaussian setting. 

As for exact channel synthesis of Gaussian distributions, Yu and
Tan \cite{YuTan2020b} proved the inner bound 
\begin{equation}
\mathcal{R}_{\mathrm{Exact}}(\pi_{\rho})\supseteq\mathcal{R}(\pi_{\rho}),\label{eq:Gaussian-1}
\end{equation}
where 
\begin{align*}
 & \mathcal{R}(\pi_{\rho}):=\left\{ \begin{array}{rcl}
\left(R,R_{0}\right) & : & \alpha\in[\rho^{2},1],\alpha\beta=\rho^{2},\\
R & \ge & \frac{1}{2}\log\left[\frac{1}{1-\alpha}\right],\\
R+R_{0} & \ge & \frac{1}{2}\log\left[\frac{1-\rho^{2}}{\left(1-\alpha\right)\left(1-\beta\right)}\right]+\frac{\rho\sqrt{\left(1-\alpha\right)\left(1-\beta\right)}}{1-\rho^{2}}
\end{array}\right\} .
\end{align*}
They conjectured this inner bound is tight. 
\begin{conjecture}[\cite{YuTan2020b} ]
\label{conj:ECS} For every $0\leq\rho<1$, 
\[
\mathcal{R}_{\mathrm{Exact}}(\pi_{\rho})=\mathcal{R}(\pi_{\rho}).
\]
\end{conjecture}

\subsection{Main results}

Our first main contribution is the following lower bound on $C_{{\rm Exact}}(\pi_{\rho})$.
Recall $\pi_{\rho}=\mathcal{N}(\mathbf{0},\Sigma_{\rho})$. 
\begin{thm}
\label{thm:GECI} For every $0\leq\rho<1$, 
\[
C_{{\rm Exact}}(\pi_{\rho})\geq\frac{1}{2}\log\frac{1+\rho}{1-\rho}+\frac{\rho}{1+\rho}.
\]
\end{thm}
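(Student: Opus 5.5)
The plan is to lower-bound the multi-letter expression \eqref{eq:exact-multiletter} directly, using the structure that exactness imposes on each component. Fix $n$ and a code $W$ with $X^n-W-Y^n$ and $P_{X^nY^n}=\pi_\rho^{\otimes n}$. Write $\pi:=\pi_\rho^{\otimes n}$ and $Q_w:=P_{X^n|W=w}\otimes P_{Y^n|W=w}$.

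\textbf{Step 1: density bound and entropy reduction.} Since $\sum_w P_W(w)Q_w=\pi$, each component satisfies $P_W(w)Q_w\le \pi$ as measures. Hence
\[
P_W(w)\le \inf_{x,y}\frac{\pi(x,y)}{q_w(x,y)}.
\]
This gives the bound
\[
H(W)\ge \sum_w P_W(w)\,\sup_{x,y}\log\frac{q_w(x,y)}{\pi(x,y)}.
\]
Since the supremum dominates the average over any test measure, we obtain
\[
H(W)\ge \E_W\Big[\sup_{\mu}\big(\E_\mu \log q_W-\E_\mu\log \pi\big)\Big].
\]
We exploit the freedom in $\mu$. Take $\mu$ to be a product coupling of $P_{X^n|W=w}$ and $P_{Y^n|W=w}$, transported toward high $\pi$-density. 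The quantity $\E_\mu \log q_w=-h(X^n|W=w)-h(Y^n|W=w)$ when $\mu$ has the same marginals. The cross term $-\E_\mu\log\pi$ is then a worst-case Gaussian cross-entropy over couplings of the two marginals. For the Gaussian $\pi$, this equals $\frac{1}{2}\log\left((2\pi)^{2n}\det\Sigma_\rho^{\otimes n}\right)$ plus $\frac{1}{2(1-\rho^2)}\E[\|X\|^2+\|Y\|^2]$ minus $\frac{\rho}{1-\rho^2}\max_{\text{coupling}}\E\langle X,Y\rangle$. The maximal correlation term is an optimal-transport quantity, namely $\frac{1}{2}\left(\E\|X\|^2+\E\|Y\|^2-W_2^2\right)$.

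\textbf{Step 2: transport inequality.} The $W_2$ distance between $P_{X^n|W=w}$ and $P_{Y^n|W=w}$ must be bounded below in terms of their entropies and means. Here I would apply Fathi's Gaussian transport inequality. This is a symmetric Talagrand-type inequality: for $\mu,\nu$ with $\mu$ centered, $W_2^2(\mu,\nu)\le 2D(\mu\|\gamma)+2D(\nu\|\gamma)$. I would use its rescaled form, with the conditional means $a_w=\E[X^n|W=w]$ and $b_w=\E[Y^n|W=w]$ separated off. After averaging over $w$, the entropy terms combine into $I(X^n;W)+I(Y^n;W)$, which is at least the Wyner-type quantity. What remains is a term of the form $\frac{\rho}{1-\rho^2}\E_W[\langle a_W,b_W\rangle]$ together with normalized second moments of the means.

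\textbf{Step 3: determinant inequality.} The conditional means satisfy constraints from exactness. We have $\E[a_Wb_W^T]=\E[X^nY^{nT}]=\rho I$, and $\Cov(a_W)\preceq I$, $\Cov(b_W)\preceq I$. Mixing the entropy contribution with the mean contribution, the lower bound becomes an optimization over $K_a=\Cov(a)$, $K_b=\Cov(b)$, and the cross-covariance $\rho I$. The entropy part contributes $-\frac{1}{2}\log\det(I-K_a)-\frac{1}{2}\log\det(I-K_b)$ via Gaussian maximum entropy. I would use a determinant inequality (block PSD Schur complement plus AM-GM on eigenvalues) to show the per-letter minimum is attained at $K_a=K_b=\rho I$. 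There the value is $\frac{1}{2}\log\frac{1+\rho}{1-\rho}+\frac{\rho}{1+\rho}$, so the $\rho/(1+\rho)$ comes from the $\langle a,b\rangle$ cross term.

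The main obstacle is Step 1–2: getting an \emph{exact} representation of the worst-case cross-entropy, so that no slack is lost. In particular, the sup over $\mu$ must be allowed to move mass beyond couplings of the fixed marginals; that is where the extra $\rho/(1+\rho)$ should appear. I would first try restricting $\mu$ to Gaussian shifts and rescalings of a coupling, optimizing the shift in closed form. I would then check tightness against the Gaussian scheme of \cite{YuTan2020_exact}.
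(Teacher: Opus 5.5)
Your Step 1 is the paper's multi-letter converse (pointwise $P_W(w)q_w\le\pi$, then test against couplings of $P_{X^n|W=w}$ and $P_{Y^n|W=w}$). Your Step 3 is the paper's determinant lemma (the Schur complement gives $V\succeq\rho^2U^{-1}$, then $(1-u)(1-\rho^2/u)\le(1-\rho)^2$).

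The gap is a sign reversal in Steps 1--2. Because $-\log\pi$ contains $-\frac{\rho}{1-\rho^2}\langle x,y\rangle$ with $\rho>0$, the worst-case cross-entropy comes from \emph{minimizing} $\E\langle X^n,Y^n\rangle$ over couplings. That means pushing mass toward \emph{low} $\pi$-density, not high. The result is a constant plus $\frac{1}{2(1-\rho^2)}\bigl[|a_w|^2+|b_w|^2+(1+\rho)S_w-2\rho\, a_w\cdot b_w-\rho W_2^2(\mu^0_w,-\nu^0_w)\bigr]$, with the centered law of $Y^n$ \emph{reflected}.

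You instead take the maximal-correlation coupling and $W_2^2(\mu,\nu)$. That is still a valid bound, but $W_2^2$ then enters with a positive coefficient, so you need a lower bound on it, as you say yourself. Fathi's inequality only bounds $W_2^2$ from above, so Step 2 cannot be carried out. The loss is substantial. For components that are Gaussian with covariance $(1-\rho)I_n$ (the shape of the Wyner decomposition), your functional evaluates to $n[\frac12\log\frac{1+\rho}{1-\rho}-\frac{\rho}{1+\rho}]$, below Wyner's value. The correctly signed functional gives exactly $n[\frac12\log\frac{1+\rho}{1-\rho}+\frac{\rho}{1+\rho}]$.

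Your diagnosis of the main obstacle also points the wrong way. No mass needs to move beyond couplings of the fixed marginals. Nor does $\rho/(1+\rho)$ come from the $\langle a_W,b_W\rangle$ term. By total covariance and $\E[a_Wb_W^{\mathsf T}]=\rho I_n$, the quantity $\E[|a_W|^2+|b_W|^2+S_W-2\rho\, a_W\cdot b_W]=2n(1-\rho^2)$ is a constant, so the means leave nothing to optimize.

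The missing mechanism is the free scale $t$ in the rescaled Fathi inequality, applied to the centered pair $(\mu^0_w,-\nu^0_w)$: $W_2^2\le S_w-2tH_w+2nt\log(2\pi t)$. After substitution, $\E H_W$ carries the coefficient $-(1-\frac{\rho t}{1-\rho^2})<0$ for $t<\frac{1-\rho^2}{\rho}$. So Step 3 must supply an \emph{upper} bound $\E H_W\le n\log(2\pi e(1-\rho))$, which follows from Gaussian max-entropy, concavity of $\log\det$, and the determinant lemma. Maximizing the resulting bound over $t$ gives $t=1-\rho$. At that value the Fathi term $\frac{\rho t}{1-\rho^2}[\E H_W-n\log(2\pi t)]$ contributes exactly $n\frac{\rho}{1+\rho}$ on top of Wyner's $\frac n2\log\frac{1+\rho}{1-\rho}$. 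Without the sign fix and this $t$-optimization, your outline does not produce the bound.
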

Combining this lower bound with Yu--Tan's upper bound in \eqref{eq:UB}
confirm Conjecture \ref{conj:ECI} positively. 
\begin{thm}
For every $0\leq\rho<1$, 
\[
C_{{\rm Exact}}(\pi_{\rho})=\frac{1}{2}\ln\frac{1+\rho}{1-\rho}+\frac{\rho}{1+\rho}.
\]
\end{thm}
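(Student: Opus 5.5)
The statement is the combination of two bounds that are already available at this point in the paper. The achievability direction is Yu--Tan's upper bound \eqref{eq:UB}. The converse direction is Theorem~\ref{thm:GECI}. So the proof simply writes
\[
\tfrac12\log\tfrac{1+\rho}{1-\rho}+\tfrac{\rho}{1+\rho}\;\le\; C_{\rm Exact}(\pi_\rho)\;\le\;\tfrac12\ln\tfrac{1+\rho}{1-\rho}+\tfrac{\rho}{1+\rho}.
\]
Equality follows for every $0\le\rho<1$. The only bookkeeping to check is that $\log$ in Theorem~\ref{thm:GECI} and $\ln$ in \eqref{eq:UB} are the same natural logarithm. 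They are: the paper uses nats throughout (see the expression for $C_{\rm Wyner}$ and the definition $K\sim\mathrm{Unif}[e^{nR_0}]$).

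The real content is therefore the converse, Theorem~\ref{thm:GECI}, and my plan for it is as follows.

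\textbf{Step 1: reduce to the $\infty$-R\'enyi common information.} Yu--Tan showed that $C_{\rm Exact}(\pi_\rho)$ equals this quantity. Given any exact code with $X^n-W-Y^n$ and $P_{X^nY^n}=\pi_\rho^{\otimes n}$, I would bound $H(W)$ from below by a worst-case cross-entropy. Since $\pi^{\otimes n}$ is a mixture over $w$, each product component is dominated pointwise: $P_{X^n|w}(x)P_{Y^n|w}(y)\,P_W(w)\le \pi^{\otimes n}(x,y)$. Taking logarithms and averaging gives
\[
H(W)\ \ge\ \E\Bigl[-\log\tfrac{\pi^{\otimes n}(X^n,Y^n)}{P_{X^n|W}(X^n)P_{Y^n|W}(Y^n)}\Bigr]+\text{(terms)}.
\]
This yields a single-letterizable quantity: the infimum, over product couplings $Q_X\otimes Q_Y$, of cross-entropy against the Gaussian density, minus the entropies.

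\textbf{Step 2: rewrite the Gaussian cross-entropy as a transport cost.} The term $-\log\pi_\rho(x,y)$ is quadratic. Its cross term $\frac{\rho}{1-\rho^2}x\cdot y$ is handled through an optimal-transport representation: the worst case over $w$ of the cross term is controlled by a maximal-correlation coupling. I would then use Fathi's Gaussian transport inequality, which bounds $\sup_{\text{couplings}}\E\langle X,Y\rangle$ in terms of the relative entropies of the marginals with respect to the standard Gaussian. This produces a lower bound depending only on the covariance of the conditional means $\E[X^n|W]$ and $\E[Y^n|W]$ and on the conditional entropies.

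\textbf{Step 3: optimize.} Exactness forces $\Cov(\E[X^n|W])$ and $\Cov(\E[Y^n|W])$ to be linked through $\rho$, together with the conditional independence structure. A determinant inequality on this block covariance then reduces the bound to a scalar optimization whose minimum is $\frac12\log\frac{1+\rho}{1-\rho}+\frac{\rho}{1+\rho}$. The extremal choice is the Yu--Tan scheme, with $W$ carrying covariance $\rho I$.

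I expect Step 2 to be the main obstacle. The difficulty is obtaining an exact, not lossy, transport representation, so that the extra term $\frac{\rho}{1+\rho}$ survives rather than collapsing to Wyner's value.
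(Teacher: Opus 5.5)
Your proof is correct and matches the paper's: the theorem is simply the sandwich of the converse in Theorem~\ref{thm:GECI} with Yu--Tan's achievability bound \eqref{eq:UB}, both expressed in nats. One caution about your optional sketch of the converse. The pointwise domination must be integrated against an arbitrary coupling of $P_{X^n|w}$ and $P_{Y^n|w}$ and then maximized (the supremum defining $\mathcal{H}^{(n)}_\rho$), because averaging under the actual product law only returns Wyner's bound $I(X^n,Y^n;W)$; also, Fathi's inequality must be applied in its rescaled form, with Gaussian variance $t=1-\rho$, to obtain the sharp constant.
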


\begin{rem}
If $\rho<0$, replace $Y$ by $-Y$. Since exact common information
is invariant under deterministic bijections applied separately to
either terminal, $C_{{\rm Exact}}(\pi_{\rho})=C_{{\rm Exact}}(\pi_{|\rho|}).$
Hence the general scalar Gaussian formula is 
\[
C_{{\rm Exact}}(\pi_{\rho})=\frac{1}{2}\log\frac{1+|\rho|}{1-|\rho|}+\frac{|\rho|}{1+|\rho|},\qquad|\rho|<1.
\]
\end{rem}
This theorem states that 
\[
\text{exact CI}=\text{Wyner CI}+\text{exactness penalty},
\]
with 
\[
\text{exactness penalty}=C_{{\rm Exact}}(\pi_{\rho})-C_{{\rm Wyner}}(\pi_{\rho})=\frac{\rho}{1+\rho}.
\]
Thus the exact common information is strictly larger than Wyner's
common information for every $\rho>0$. The gap satisfies 
\[
0<\frac{\rho}{1+\rho}<\frac{1}{2},
\]
and 
\[
\lim_{\rho\uparrow1}\frac{\rho}{1+\rho}=\frac{1}{2}.
\]
In bits per source symbol, the gap is 
\[
\frac{\rho}{1+\rho}\log_{2}e\leq\frac{1}{2}\log_{2}e\approx0.7213.
\]

The proof idea of Theorem \ref{thm:GECI} can be applied to exact
channel synthesis of Gaussian distributions, leading to our second
main contribution. 
\begin{thm}
\label{thm:GECS} For every $0\leq\rho<1$, 
\[
\mathcal{R}_{\mathrm{Exact}}(\pi_{\rho})\subseteq\mathcal{R}(\pi_{\rho}).
\]
\end{thm}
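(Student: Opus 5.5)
We will prove the converse by reducing exact channel synthesis to a multi-letter single-shot bound and then lower bounding that bound by the same optimal-transport argument used for Theorem \ref{thm:GECI}. Fix a code $(P_{M|X^{n}K},P_{Y^{n}|MK})$ with $P_{Y^{n}|X^{n}}=\pi^{\otimes n}_{Y|X}$, and set $W=(M,K)$. Then $X^{n}-W-Y^{n}$ holds, because $Y^{n}$ is generated from $(M,K)$ only. Moreover, $K$ is uniform and independent of $X^{n}$. The two rate constraints become $nR\ge H(M|K)\ge I(X^{n};M|K)=I(X^{n};W)$ and $n(R+R_{0})\ge H(M,K)=H(W)$. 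Since $H(W)\ge I(X^nY^n;W)$ is too weak for exactness, we keep $H(W)$ and control it through exactness. For every $w$, the density of $P_{X^{n}Y^{n}|W=w}$ is dominated as $P_{W}(w)\,p_{X^{n}|w}\,p_{Y^{n}|w}\le\pi^{\otimes n}_{XY}$. Hence
\[
H(W)\ \ge\ \E\Big[\log\frac{p_{X^{n}|W}(X^{n})\,p_{Y^{n}|W}(Y^{n})}{\pi^{\otimes n}_{XY}(X^{n},Y^{n})}\Big] = D(P_{X^{n}|W}\|\cdot)\text{-type terms}+\text{a worst-case cross-entropy},
\]
exactly as in the exact common information converse. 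The extra structure here is the additional constraint $I(X^{n};W)\le nR$.

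The key steps, in order, are as follows.

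\textbf{(i)} Write the bound above as $H(W)\ge h(X^nY^n)$-correction terms plus $\E_W[\sup$-type cross-entropy$]$. Use the exact optimal-transport representation of the worst-case Gaussian cross-entropy (established for Theorem \ref{thm:GECI}) to express the penalty term. This term generalizes $\rho/(1+\rho)$ and involves the transport cost between $P_{X^n|W=w}$ and $P_{Y^n|W=w}$ under the quadratic form induced by $\Sigma_\rho^{-1}$.

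\textbf{(ii)} Apply Fathi's Gaussian transport inequality to lower bound this cost by a function of the conditional means $\E[X^n|W]$ and $\E[Y^n|W]$, together with conditional entropies.

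\textbf{(iii)} Introduce the covariance matrices of the conditional means, $A=\frac1n\Cov(\E[X^n|W])$ and $B=\frac1n\Cov(\E[Y^n|W])$, and their cross-covariance. The Markov chain and $\Cov(X^n,Y^n)=\rho I$ force $\E[\E[X^n|W]\E[Y^n|W]^{T}]=\rho I$. A determinant (Cauchy--Schwarz) inequality then yields $\det$-level constraints playing the role of $\alpha\beta\ge\rho^{2}$, with normalized parameters $\alpha=\frac1n\log$-average eigenvalue proxies of $A$ and $\beta$ of $B$.

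\textbf{(iv)} Bound the communication rate via $I(X^{n};W)\ge h(X^n)-h(X^n|W)\ge\frac n2\log\frac1{1-\alpha}$. This uses the max-entropy bound $h(X^n|W)\le\frac12\log\det(2\pi e(I-A))$ and concavity of $\log\det$ to pass from matrices to the scalar $\alpha$; the analogous quantity in $Y$ defines $\beta$.

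\textbf{(v)} Combine (i)--(iii) to obtain the sum-rate bound $\frac12\log\frac{1-\rho^2}{(1-\alpha)(1-\beta)}+\frac{\rho\sqrt{(1-\alpha)(1-\beta)}}{1-\rho^2}$. Then use monotonicity in $\alpha$ and $\beta$ to reduce to the boundary $\alpha\beta=\rho^{2}$, which recovers $\mathcal R(\pi_\rho)$. Taking the closure and the union over $n$ completes the argument.

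The main obstacle is step (v) together with the single-letterization hidden in (iii)--(iv). The sum-rate expression is not jointly convex in the eigenvalues of $(A,B)$, so per-coordinate bounds cannot be naively averaged. I would first try to show that the function $(a,b)\mapsto\frac12\log\frac{1-\rho^2}{(1-a)(1-b)}+\frac{\rho\sqrt{(1-a)(1-b)}}{1-\rho^2}$, restricted to $ab\ge\rho^2$, has a convex lower envelope in the variables $(-\log(1-a),-\log(1-b))$. If so, Jensen over the eigenvalues, combined with the determinant inequality, gives the scalar bound. If that fails, the fallback is to parametrize by $\log\det$ directly, exactly as in the proof of Theorem \ref{thm:GECI}, and show that the rate constraint and the sum-rate constraint are each monotone in one aggregated scalar.
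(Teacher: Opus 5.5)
Your outline follows the paper's architecture. It sets $W=(M,K)$ with $nR\ge I(X^n;W)$ and $n(R+R_0)\ge H(W)$, then uses the pointwise domination bound, the optimal-transport form of the worst-case cross-entropy, and Fathi's inequality with a common scale $t$. It ends with the max-entropy and concavity-of-$\log\det$ bounds $\frac1n I(X^n;W)\ge-\frac12\log(1-\alpha)$ and $\frac1n\E H_W\le\log(2\pi e)+\frac12\log[(1-\alpha)(1-\beta)]$. Here $1-\alpha=\det(I_n-U)^{1/n}$, $1-\beta=\det(I_n-V)^{1/n}$, $U=\Cov(\E[X^n|W])$ and $V=\Cov(\E[Y^n|W])$. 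Drop the factor $\frac1n$ in your definition; step (iv) needs $I_n-U=\E\Cov(X^n|W)$.

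The gap is the one step that is new compared with Theorem~\ref{thm:GECI}: turning the matrix constraint $V\succeq\rho^2U^{-1}$ into $\alpha\beta\ge\rho^2$ for \emph{these} parameters. The determinant inequality you cite gives $\det U\det V\ge\rho^{2n}$. That concerns $\det U$, not $\det(I_n-U)$, so a further inequality is needed, and you leave this single-letterization explicitly open. Your fallback of aggregating via $\log\det$ is the right framing, but it still needs exactly this inequality.

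Your first remedy aims at the wrong target. No averaging of per-coordinate sum-rate bounds is needed. The parameter $t$ is common to all coordinates, and after the law of total covariance the sum-rate bound sees the code only through $\E H_W$. After bounding $\E H_W$, it sees only $s=\sqrt{(1-\alpha)(1-\beta)}$. Taking $t=s$ gives $\frac12\log(1-\rho^2)-\log s+\frac{\rho s}{1-\rho^2}$, which is decreasing for $s<(1-\rho^2)/\rho$. Non-convexity of the two-variable expression never enters; only the constraint on $(\alpha,\beta)$ matters.

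The paper closes the gap by eliminating $V$. It uses $\det(I_n-V)\le\det(I_n-\rho^2U^{-1})=\prod_i(1-\rho^2/u_i)$ with $u_i\in[\rho^2,1]$. Since $x\mapsto\log(1-\rho^2/(1-e^x))$ is concave on $x\le\log(1-\rho^2)$, Jensen at $x_i=\log(1-u_i)$ gives $\frac1n\log\det(I_n-V)\le\log(1-\rho^2/\alpha)$, i.e., $\beta\ge\rho^2/\alpha$.

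In your two-parameter framing, AM--GM on the eigenvalues also works. Since $0\preceq U\preceq I_n$, we have $\det(U)^{1/n}+\det(I_n-U)^{1/n}\le\frac1n\tr U+\frac1n\tr(I_n-U)=1$, so $\alpha\ge\det(U)^{1/n}$. Likewise $\beta\ge\det(V)^{1/n}$, hence $\alpha\beta\ge(\det U\det V)^{1/n}\ge\rho^2$.

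With either inequality your step (v) is correct. First, $\alpha\ge\rho^2$ because $\beta\le1$. Lowering $\beta$ to $\rho^2/\alpha$ raises $s$ to $\sqrt{(1-\alpha)(1-\rho^2/\alpha)}\le1-\rho<(1-\rho^2)/\rho$. Choosing $t$ equal to this value yields exactly the sum-rate constraint of $\mathcal R(\pi_\rho)$, together with $R\ge\frac12\log\frac{1}{1-\alpha}$. Without such an inequality, the proposal does not yet establish the outer bound.
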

Combining this outer bound with Yu--Tan's inner bound in \eqref{eq:Gaussian-1}
confirm Conjecture \ref{conj:ECS} positively. 
\begin{thm}
For every $0\leq\rho<1$, 
\begin{equation}
\mathcal{R}_{\mathrm{Exact}}(\pi_{\rho})=\mathcal{R}(\pi_{\rho}).\label{eq:Gaussian}
\end{equation}
\end{thm}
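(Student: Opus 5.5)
The final statement is the equality $\mathcal{R}_{\mathrm{Exact}}(\pi_{\rho})=\mathcal{R}(\pi_{\rho})$. The inclusion $\supseteq$ is Yu--Tan's inner bound \eqref{eq:Gaussian-1}, so the proof consists of establishing Theorem \ref{thm:GECS}, the outer bound $\subseteq$. I would handle it in two steps. First, I would reduce the operational region to a multi-letter region. Second, I would bound that multi-letter region from below by the same transport and determinant argument used for Theorem \ref{thm:GECI}, now parametrized by the amount of shared randomness.

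\emph{Reduction to a single-letter-type problem.} Given an exact code, set $W=(M,K)$. The constraint $P_{Y^n|X^n}=\pi^{\otimes n}_{Y|X}$ together with the code structure gives the Markov chain $X^n - W - Y^n$. Since $K$ is uniform and independent of $X^n$,
\[
nR\ge H(M|K)\ge I(M;X^n|K)=I(W;X^n),
\]
where the last step uses $I(K;X^n)=0$. In addition,
\[
n(R+R_0)\ge H(M|K)+H(K)=H(W).
\]
It therefore suffices to show the following. For every exact decomposition $\pi_\rho^{\otimes n}=\sum_w P_W(w)P_{X^n|w}P_{Y^n|w}$ with $\frac1n I(W;X^n)\le R$, we have $\frac1nH(W)\ge\frac12\log\frac{1-\rho^2}{(1-\alpha)(1-\beta)}+\frac{\rho\sqrt{(1-\alpha)(1-\beta)}}{1-\rho^2}$ for some $\alpha\in[\rho^2,1]$ with $\alpha\beta=\rho^2$ and $R\ge\frac12\log\frac1{1-\alpha}$.

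The natural choice of $\alpha$ is the normalized "explained variance" of $X^n$ given $W$. I would define it through the conditional-mean covariance $\Sigma_X:=\Cov(\E[X^n|W])$, taking $1-\alpha=\det(I-\Sigma_X)^{1/n}$, and define $\beta$ analogously from $\Sigma_Y=\Cov(\E[Y^n|W])$.

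\emph{Step 1: the communication constraint.} Gaussian maximum entropy applied to $h(X^n|W)$ gives $I(W;X^n)\ge\frac12\log\det(I-\Sigma_X)^{-1}$. This yields $R\ge\frac12\log\frac1{1-\alpha}$.

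\emph{Step 2: the covariance constraint.} Conditional independence gives $\Cov(X^n,Y^n)=\rho I=\E[\Cov(\E[X^n|W],\E[Y^n|W])]$. A Cauchy--Schwarz/determinant inequality applied to the joint covariance of the conditional means then yields $\alpha\beta\ge\rho^2$ in the determinant-normalized sense. Because the sum-rate bound is monotone, one may shrink to equality $\alpha\beta=\rho^2$.

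\emph{Step 3: the entropy bound.} I would write $H(W)\ge\E\left[-\log P_W(W)\right]$. Exactness implies $P_W(w)p_{X^n|w}(x)p_{Y^n|w}(y)\le\pi_\rho^{\otimes n}(x,y)$ pointwise. Hence
\[
-\log P_W(w)\ge\sup_{x,y}\left[\log p_{X^n|w}(x)+\log p_{Y^n|w}(y)-\log\pi^{\otimes n}_\rho(x,y)\right].
\]
Averaging this worst case, rather than bounding it pointwise, is where I would use the optimal-transport representation of the worst-case Gaussian cross-entropy. Fathi's Gaussian transport inequality would then give a lower bound in terms of $h(X^n|W=w)$, $h(Y^n|W=w)$ and the conditional means. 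Summing over $w$ produces $\frac12\log\frac{1-\rho^2}{(1-\alpha)(1-\beta)}$ from the entropy terms, plus a cross term $\frac{\rho}{1-\rho^2}\,\E\langle\cdot,\cdot\rangle$ involving the conditional means. Finally, AM--GM on the determinant of the residual covariances lower-bounds this cross term by $\rho\sqrt{(1-\alpha)(1-\beta)}/(1-\rho^2)$.

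\emph{Main obstacle.} The hard step is Step 3. The difficulty is making the supremum bound compatible with entropies and means that are tuned simultaneously to $\alpha$ and $\beta$, which are no longer symmetric as they were in the common-information case (Theorem \ref{thm:GECI} corresponds to $\alpha=\beta=\rho$). It must also be checked that the resulting function is jointly convex or monotone, so that replacing the per-$w$ quantities by their averages via Jensen goes in the right direction. I would first try to reuse the Theorem \ref{thm:GECI} argument with unequal scalings of $x$ and $y$, rescaling $Y^n$ by $\sqrt{\alpha/\beta}$ to symmetrize, and then optimize over the rescaling.
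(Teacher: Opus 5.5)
Your reduction ($W=(M,K)$, $nR\ge I(W;X^n)$, $n(R+R_0)\ge H(W)$) and Step 1 are exactly the paper's. Step 2 can be completed along your lines, and this gives a slightly slicker alternative to the paper's argument. Take $U=\Cov(\E[X^n|W])$ and $V=\Cov(\E[Y^n|W])$. Minkowski's determinant inequality yields $\det(I-U)^{1/n}\le 1-\det(U)^{1/n}$, and likewise for $V$. The Schur complement of the positive semidefinite block matrix gives $V\succeq\rho^2U^{-1}$, so $\det U\det V\ge\rho^{2n}$ and hence $\alpha\beta\ge\rho^2$. Shrinking to $\beta=\rho^2/\alpha$ is then justified: $T\mapsto\frac12\log\frac{1-\rho^2}{T}+\frac{\rho\sqrt T}{1-\rho^2}$ is decreasing on $(0,(1-\rho)^2]$, and this interval contains both $(1-\alpha)(1-\beta)$ and $(1-\alpha)(1-\rho^2/\alpha)$. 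The paper instead defines $\beta:=\rho^2/\alpha$ outright and proves $\frac1n\sum_i\log(1-\rho^2/u_i)\le\log(1-\rho^2/\alpha)$ via concavity of $x\mapsto\log\bigl(1-\rho^2/(1-e^x)\bigr)$.

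The genuine gap is Step 3, which is where the theorem is actually proved, and the mechanism you sketch there would fail. Write $a_w,b_w$ and $A_w,B_w$ for the conditional means and covariances. The mean and trace terms average to $\E\bigl[|a_W|^2+|b_W|^2+\tr A_W+\tr B_W-2\rho\,a_W\cdot b_W\bigr]=2n(1-\rho^2)$, a constant fixed by the moment constraints. So there is no conditional-mean cross term to estimate. The only code-dependent quantity besides the entropies is the centered term $\frac12\bigl[\tr A_w+\tr B_w-W_2^2(\mu^0_w,-\nu^0_w)\bigr]$. This term cannot be lower-bounded from covariance data by AM--GM. That works only for Gaussian conditionals; non-Gaussian laws with the same covariances (a symmetric law against a strongly skewed one) can make it arbitrarily small. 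The deficit therefore has to be charged to $H_w=h(X^n|w)+h(Y^n|w)$. Even for Gaussian conditionals, $\E\bigl[(\det A_W\det B_W)^{1/(2n)}\bigr]$ is only bounded \emph{above} by $\sqrt{(1-\alpha)(1-\beta)}$, so the entropy part and the cross part cannot be bounded separately.

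The missing idea is Fathi's inequality with a free reference variance, $W_2^2(\mu,\nu)\le 2t[D(\mu\|\gamma_t)+D(\nu\|\gamma_t)]$ with $\gamma_t=\mathcal N(0,tI_n)$. It lower-bounds the centered term by $tH_w-nt\log(2\pi t)$, so the per-$w$ bound becomes affine in $H_w$ with coefficient $\rho t/(1-\rho^2)-1$, which is negative for $t<(1-\rho^2)/\rho$. Averaging is then trivial, and only $\E H_W\le n\log(2\pi e)+\frac12\log[\det(I-U)\det(I-V)]$ is needed. Choosing $t=\sqrt{(1-\alpha)(1-\beta)}\le1-\rho$ produces exactly $\frac12\log\frac{1-\rho^2}{(1-\alpha)(1-\beta)}+\frac{\rho\sqrt{(1-\alpha)(1-\beta)}}{1-\rho^2}$. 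For Gaussian conditionals this is just the tangent-line bound $\tau\ge t(1+\log(\tau/t))$ applied to your AM--GM quantity $\tau$. The unscaled choice $t=1$ is strictly suboptimal, and its entropy coefficient is even positive once $\rho>(\sqrt5-1)/2$. Since only the product $(1-\alpha)(1-\beta)$ enters, the asymmetry you single out as the main obstacle is a non-issue: no rescaling of $Y^n$ by $\sqrt{\alpha/\beta}$ and no joint-convexity check are required.
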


\section{Proof of Theorem \ref{thm:GECI}}

In this section, we prove Theorem \ref{thm:GECI} by using the following
strategy: 
\begin{align*}
\text{exact synthesis} & \Longrightarrow\text{multiletter bound}\Longrightarrow\text{optimal transport}\\
 & \quad\Longrightarrow\text{Gaussian }T_{2}\Longrightarrow\text{covariance determinant extremality}.
\end{align*}
The case $\rho=0$ is trivial, and thus, we only consider $\rho\in(0,1)$. 

\subsection{A Multi-letter Bound}

Define 
\[
\pi^{\otimes n}_{\rho}=\mathcal{N}(\mathbf{0},\Sigma^{\otimes n}_{\rho}),
\]
where 
\[
\Sigma^{\otimes n}_{\rho}=\begin{pmatrix}I_{n} & \rho I_{n}\\
\rho I_{n} & I_{n}
\end{pmatrix}.
\]
Equivalently, $X^{n}\sim\mathcal{N}(0,I_{n}),\,Y^{n}\sim\mathcal{N}(0,I_{n}),$and
$\E[X^{n}(Y^{n})^{\mathsf{T}}]=\rho I_{n}.$ We also use $\pi^{\otimes n}_{\rho}$
to denote the corresponding density. 

For two probability measures $\mu,\nu$ on $\R^{n}$, let $\mathcal{C}(\mu,\nu)$
denote the set of all couplings of $\mu$ and $\nu$. Define the maximal
Gaussian cross-entropy 
\begin{equation}
\mathcal{H}^{(n)}_{\rho}(\mu,\nu):=\sup_{Q\in\mathcal{C}(\mu,\nu)}\int Q(dx,dy)\log\frac{1}{\pi^{\otimes n}_{\rho}(x^n,y^n)}.\label{eq:max-cross}
\end{equation}

We now introduce the $n$-letter functional: 
\begin{equation}
\Gamma_{n}(\rho):=\inf\left\{ -h(X^{n}|W)-h(Y^{n}|W)+\E_{W}\mathcal{H}^{(n)}_{\rho}(P_{X^{n}|W},P_{Y^{n}|W})\right\} ,\label{eq:Gamma-def}
\end{equation}
where the infimum is over all discrete $W$ and conditional distributions
satisfying $P_{X^{n}Y^{n}}=\pi^{\otimes n}_{\rho},\,X^{n}-W-Y^{n}.$

Because of conditional independence, 
\[
h(X^{n},Y^{n}|W)=h(X^{n}|W)+h(Y^{n}|W).
\]

We first prove the fundamental converse.
\begin{prop}
\label{prop:multiletterbound}For every exact $n$-letter synthesis
code $(W,P_{X^{n}|W},P_{Y^{n}|W})$ for $\pi^{\otimes n}_{\rho}$,
we have $H(W)\geq\Gamma_{n}(\rho).$ As a consequence, 
\[
C_{{\rm Exact}}(\pi_{\rho})\geq\limsup_{n\to\infty}\frac{1}{n}\Gamma_{n}(\rho).
\]
\end{prop}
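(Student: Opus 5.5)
Fix an exact code $(W,P_{X^n|W},P_{Y^n|W})$ with $P_{X^nY^n}=\pi^{\otimes n}_\rho$ and $X^n-W-Y^n$. The bound comes from a standard mutual-information argument. Since $W$ is discrete, $H(W)\ge I(X^n,Y^n;W)$. Because $(X^n,Y^n)$ has the density $\pi^{\otimes n}_\rho$ and each conditional law $P_{X^nY^n|W=w}$ is absolutely continuous with respect to it, we can write
\[
I(X^n,Y^n;W)=\E_W D\bigl(P_{X^nY^n|W}\,\|\,\pi^{\otimes n}_\rho\bigr).
\]
(If some conditional law were not absolutely continuous, then $h(X^n|W=w)=-\infty$ and the term would be $+\infty$, so the inequality holds trivially. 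We may also assume $H(W)<\infty$, since otherwise there is nothing to prove.)

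Next expand each divergence as negative entropy plus cross-entropy:
\[
D\bigl(P_{X^nY^n|W=w}\|\pi^{\otimes n}_\rho\bigr)=-h(X^n,Y^n|W=w)+\int P_{X^nY^n|W=w}\log\frac{1}{\pi^{\otimes n}_\rho}.
\]
By conditional independence, $h(X^n,Y^n|W=w)=h(X^n|W=w)+h(Y^n|W=w)$. The conditional joint law $P_{X^n|W=w}\otimes P_{Y^n|W=w}$ is one particular coupling of its marginals, so its cross-entropy is at most $\mathcal{H}^{(n)}_\rho(P_{X^n|W=w},P_{Y^n|W=w})$ from \eqref{eq:max-cross}. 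Averaging over $W$ gives
\[
H(W)\ge -h(X^n|W)-h(Y^n|W)+\E_W\mathcal{H}^{(n)}_\rho(P_{X^n|W},P_{Y^n|W}).
\]
The right-hand side is the objective in \eqref{eq:Gamma-def} evaluated at a feasible point, so it is at least $\Gamma_n(\rho)$.

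The main technical point is making this decomposition legitimate: the entropy and cross-entropy terms must be finite, or at least not of the form $\infty-\infty$. The cross-entropy is finite because $\log(1/\pi^{\otimes n}_\rho)$ is a quadratic plus a constant. The quadratic has second moments bounded by those of $\pi^{\otimes n}_\rho$: averaged over $W$ they equal the $\pi^{\otimes n}_\rho$ moments, so they are finite for a.e.\ $w$. The supremum over couplings is also finite, since by Cauchy--Schwarz the cross term $\E\langle X,Y\rangle$ is bounded by the second moments. With these moment bounds, the conditional differential entropies are bounded above, and the split is justified. Cases where an entropy equals $-\infty$ only make the bound trivially true.

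For the consequence, apply the bound to any code achieving the infimum in \eqref{eq:exact-multiletter} up to $\epsilon$. This gives $\frac1n\inf H(W)\ge\frac1n\Gamma_n(\rho)$ for every $n$. Since the limit in \eqref{eq:exact-multiletter} exists, it dominates $\limsup_n\frac1n\Gamma_n(\rho)$.
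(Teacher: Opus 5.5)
There is a genuine gap at the step where you pass from the product coupling to the supremum over couplings. Write $\mathrm{CE}_w:=\int P_{X^n|W=w}\otimes P_{Y^n|W=w}\,\log\frac{1}{\pi^{\otimes n}_\rho}$. Your chain establishes $H(W)\ge I(X^n,Y^n;W)=-h(X^n|W)-h(Y^n|W)+\E_W\,\mathrm{CE}_W$. Separately, it establishes $\mathrm{CE}_w\le\mathcal{H}^{(n)}_\rho(P_{X^n|w},P_{Y^n|w})$. These two inequalities point in opposite directions. You cannot replace $\mathrm{CE}_W$ by the larger quantity $\mathcal{H}^{(n)}_\rho$ on the right-hand side of a lower bound for $H(W)$, so your displayed conclusion does not follow.

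This cannot be patched within your approach. $I(X^n,Y^n;W)$ is never larger than the objective in \eqref{eq:Gamma-def}. The difference is
\[
\E_W\bigl[\mathcal{H}^{(n)}_\rho-\mathrm{CE}_W\bigr]=\frac{\rho}{2(1-\rho^2)}\E_W\bigl[\tr A_W+\tr B_W-W_2^2(\mu^0_W,-\nu^0_W)\bigr]\ge0 .
\]
This is exactly the Wasserstein gain that the paper later lower-bounds with Fathi's inequality to produce the penalty $\frac{\rho}{1+\rho}$. If it is discarded (the limit $t\to0$ in the paper's estimates), you get back only Wyner's value $\frac12\log\frac{1+\rho}{1-\rho}$. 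The KL divergence $D(P_{X^nY^n|W=w}\|\pi^{\otimes n}_\rho)$ simply does not see couplings other than the actual product one.

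The missing idea is to use exactness pointwise rather than on average. Since $\pi^{\otimes n}_\rho=\sum_{w'}P_W(w')\,p_{X^n|w'}\,p_{Y^n|w'}$ has nonnegative summands, for each $w$ with $P_W(w)>0$ and all $(x^n,y^n)$ we have
\[
\log\frac{1}{\pi^{\otimes n}_\rho(x^n,y^n)}\le-\log P_W(w)-\log p_{X^n|w}(x^n)-\log p_{Y^n|w}(y^n).
\]
The right-hand side is a function of $x^n$ plus a function of $y^n$. So its integral against \emph{every} coupling $Q_w\in\mathcal{C}(P_{X^n|w},P_{Y^n|w})$ equals $-\log P_W(w)+h(X^n|w)+h(Y^n|w)$, whatever $Q_w$ is. Taking the supremum over $Q_w$ gives $\mathcal{H}^{(n)}_\rho(P_{X^n|w},P_{Y^n|w})\le-\log P_W(w)+h(X^n|w)+h(Y^n|w)$, and averaging over $W$ gives the proposition.

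This is an order-$\infty$ (max-divergence) use of exactness. It bounds $\log\frac{p_{X^n|w}p_{Y^n|w}}{\pi^{\otimes n}_\rho}$ by $-\log P_W(w)$ everywhere, not just on average under the product measure, and that is what lets the coupling be chosen adversarially. The same pointwise bound also settles your absolute-continuity caveat: each $P_{X^nY^n|W=w}$ automatically has density at most $\pi^{\otimes n}_\rho/P_W(w)$. Your derivation of $C_{\rm Exact}(\pi_\rho)\ge\limsup_n\frac1n\Gamma_n(\rho)$ from the first claim is fine.
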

\begin{IEEEproof}
Fix $w$ with $P_{W}(w)>0$. Since 
\[
\pi^{\otimes n}_{\rho}(x^n,y^n)=\sum_{w'}P_{W}(w')p_{X^{n}|w'}(x^n)p_{Y^{n}|w'}(y^n)
\]
with $p_{X^{n}|w'},p_{Y^{n}|w'}$ denote the densities of $P_{X^{n}|W=w'},P_{Y^{n}|W=w'}$,
we have pointwise 
\begin{equation}
\pi^{\otimes n}_{\rho}(x^n,y^n)\geq P_{W}(w)p_{X^{n}|w}(x^n)p_{Y^{n}|w}(y^n).\label{eq:pointwise}
\end{equation}
Taking logarithms, 
\[
\log\frac{1}{\pi^{\otimes n}_{\rho}(x^n,y^n)}\leq-\log P_{W}(w)-\log p_{X^{n}|w}(x^n)-\log p_{Y^{n}|w}(y^n).
\]

Let $Q_{w}\in\mathcal{C}(P_{X^{n}|w},P_{Y^{n}|w})$ be arbitrary.
Integrating with respect to $Q_{w}$ gives 
\[
\begin{aligned}\int Q_{w}(dx,dy)\log\frac{1}{\pi^{\otimes n}_{\rho}(x^n,y^n)} & \leq-\log P_{W}(w)\\
 & \quad+h(X^{n}|W=w)+h(Y^{n}|W=w).
\end{aligned}
\]
Since the inequality holds for every coupling $Q_{w}$, 
\[
\mathcal{H}^{(n)}_{\rho}(P_{X^{n}|w},P_{Y^{n}|w})\leq-\log P_{W}(w)+h(X^{n}|w)+h(Y^{n}|w).
\]
Therefore, 
\[
\begin{aligned} & -h(X^{n}|w)-h(Y^{n}|w)+\mathcal{H}^{(n)}_{\rho}(P_{X^{n}|w},P_{Y^{n}|w})\leq-\log P_{W}(w).\end{aligned}
\]
Averaging over $W$, 
\[
-h(X^{n}|W)-h(Y^{n}|W)+\E\mathcal{H}^{(n)}_{\rho}(P_{X^{n}|W},P_{Y^{n}|W})\leq H(W).
\]
Taking the infimum over all exact decompositions proves $H(W)\geq\Gamma_{n}(\rho).$
\end{IEEEproof}

\subsection{Evaluation of the Maximal Gaussian Cross-Entropy}

We now evaluate the functional in \eqref{eq:Gamma-def}. Fix $w$,
and write 
\[
a_{w}=\E[X^{n}|W=w],\qquad b_{w}=\E[Y^{n}|W=w],
\]
and 
\[
A_{w}=\Cov(X^{n}|W=w),\qquad B_{w}=\Cov(Y^{n}|W=w).
\]
Let 
\[
\mu^{0}_{w}=\mathcal{L}(X^{n}-a_{w}|W=w),\qquad\nu^{0}_{w}=\mathcal{L}(Y^{n}-b_{w}|W=w).
\]
The Gaussian density is 
\[
\pi^{\otimes n}_{\rho}(x^n,y^n)=\frac{1}{(2\pi)^{n}(1-\rho^{2})^{n/2}}\exp\left\{ -\frac{|x^n|^{2}+|y^n|^{2}-2\rho\langle x^n,y^n\rangle}{2(1-\rho^{2})}\right\} ,
\]
where $|\cdot|$ is the Euclidian norm. Hence,
\begin{equation}
-\log\pi^{\otimes n}_{\rho}(x^n,y^n)=n\log(2\pi\sqrt{1-\rho^{2}})+\frac{|x^n|^{2}+|y^n|^{2}-2\rho\langle x^n,y^n\rangle}{2(1-\rho^{2})}.\label{eq:gaussian-cross}
\end{equation}

Since the marginals of the coupling are fixed, maximizing the cross-entropy
is equivalent, for $\rho\geq0$, to minimizing $\E\langle X^{n},Y^{n}\rangle.$ 
\begin{lem}[Optimal transport representation]
 For every $w$, 
\begin{equation}
\begin{aligned}\inf_{Q\in\mathcal{C}(P_{X^{n}|w},P_{Y^{n}|w})}\E_{Q}\langle X^{n},Y^{n}\rangle & =a_{w}\cdot b_{w}+\frac{1}{2}\left[W^{2}_{2}(\mu^{0}_{w},-\nu^{0}_{w})-\tr A_{w}-\tr B_{w}\right].\end{aligned}
\label{eq:transport-cross}
\end{equation}
\end{lem}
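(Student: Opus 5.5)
We will prove \eqref{eq:transport-cross} by reducing the infimum to a quadratic optimal transport problem between the centered laws $\mu^{0}_{w}$ and $-\nu^{0}_{w}$. Here $-\nu^{0}_{w}$ denotes the pushforward of $\nu^{0}_{w}$ under $y\mapsto-y$. Throughout we assume the conditional second moments are finite (so $\tr A_{w},\tr B_{w}<\infty$); otherwise the conditional differential entropies appearing in $\Gamma_{n}$ are not in the regime of interest. Under this assumption every expectation below is finite and the manipulations are legitimate.

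\textbf{Step 1: center the variables.} Fix $w$ and a coupling $Q\in\mathcal{C}(P_{X^{n}|w},P_{Y^{n}|w})$. Write $X^{n}=a_{w}+U$ and $Y^{n}=b_{w}+V$, so that under $Q$ the pair $(U,V)$ has marginals $\mu^{0}_{w}$ and $\nu^{0}_{w}$, both with mean zero. Expanding the inner product, the cross terms $\E_{Q}\langle a_{w},V\rangle$ and $\E_{Q}\langle U,b_{w}\rangle$ vanish, so
\[
\E_{Q}\langle X^{n},Y^{n}\rangle=a_{w}\cdot b_{w}+\E_{Q}\langle U,V\rangle .
\]
The map $Q\mapsto\mathcal{L}(U,V)$ is a bijection between $\mathcal{C}(P_{X^{n}|w},P_{Y^{n}|w})$ and $\mathcal{C}(\mu^{0}_{w},\nu^{0}_{w})$, since it is just a fixed translation in each coordinate. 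Hence the infimum over $Q$ equals $a_{w}\cdot b_{w}$ plus the infimum of $\E\langle U,V\rangle$ over couplings of $\mu^{0}_{w}$ and $\nu^{0}_{w}$.

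\textbf{Step 2: polarization.} Set $V'=-V$, so $V'\sim-\nu^{0}_{w}$, and $(U,V)\mapsto(U,-V)$ is a bijection between $\mathcal{C}(\mu^{0}_{w},\nu^{0}_{w})$ and $\mathcal{C}(\mu^{0}_{w},-\nu^{0}_{w})$. Since $|U-V'|^{2}=|U|^{2}+|V'|^{2}-2\langle U,V'\rangle=|U|^{2}+|V|^{2}+2\langle U,V\rangle$, we obtain
\[
\E\langle U,V\rangle=\tfrac12\left[\E|U-V'|^{2}-\E|U|^{2}-\E|V|^{2}\right].
\]
The terms $\E|U|^{2}=\tr A_{w}$ and $\E|V|^{2}=\tr B_{w}$ depend only on the marginals, so they are constant over couplings.

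\textbf{Step 3: identify the Wasserstein distance.} Taking the infimum over couplings in Step 2 turns $\inf\E|U-V'|^{2}$ into $W^{2}_{2}(\mu^{0}_{w},-\nu^{0}_{w})$ by the definition of $W_{2}$. Combined with Step 1, this yields \eqref{eq:transport-cross}.

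The only delicate point is the integrability and bijection bookkeeping: we must make sure both infima range over the same class of couplings and that subtracting the finite constants $\tr A_{w},\tr B_{w}$ is valid. Finite second moments handle this. We do not need attainment of the infimum; if attainment is wanted later, it follows from standard existence of optimal couplings on $\R^{n}$.
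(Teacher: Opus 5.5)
Your proposal is correct and follows the same route as the paper: center at $(a_w,b_w)$, use the polarization identity so that only $\E|X^n-a_w+Y^n-b_w|^2$ depends on the coupling, and identify its infimum with $W_2^2(\mu^0_w,-\nu^0_w)$. You do not need to assume finite conditional second moments: for any $w$ with $P_W(w)>0$ they are automatic, since $\E[|X^n|^2\mid W=w]\le \E|X^n|^2/P_W(w)=n/P_W(w)<\infty$, and likewise for $Y^n$.
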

\begin{IEEEproof}
Under any coupling, 
\[
\begin{aligned}\E\langle X^{n},Y^{n}\rangle & =a_{w}\cdot b_{w}+\E\langle X^{n}-a_{w},Y^{n}-b_{w}\rangle\\
 & =a_{w}\cdot b_{w}+\frac{1}{2}\E|X^{n}-a_{w}+Y^{n}-b_{w}|^{2}-\frac{1}{2}\tr A_{w}-\frac{1}{2}\tr B_{w}.
\end{aligned}
\]
Minimizing $\E\langle X^{n},Y^{n}\rangle$ is therefore equivalent
to minimizing $\E|X^{n}-a_{w}+Y^{n}-b_{w}|^{2}.$ But $X^{n}-a_{w}$
has law $\mu^{0}_{w}$, while $-(Y^{n}-b_{w})$ has law $-\nu^{0}_{w}$.
Hence the minimum of $\E|X^{n}-a_{w}+Y^{n}-b_{w}|^{2}$ is $W^{2}_{2}(\mu^{0}_{w},-\nu^{0}_{w}).$
\end{IEEEproof}
Combining the lemma with \eqref{eq:gaussian-cross} gives the following
exact expression.
\begin{prop}
For every $w$, 
\begin{align}
 & \mathcal{H}^{(n)}_{\rho}(P_{X^{n}|w},P_{Y^{n}|w})=n\log(2\pi\sqrt{1-\rho^{2}})\nonumber \\
 & \qquad+\frac{|a_{w}|^{2}+|b_{w}|^{2}+(1+\rho)(\tr A_{w}+\tr B_{w})-2\rho a_{w}\cdot b_{w}-\rho W^{2}_{2}(\mu^{0}_{w},-\nu^{0}_{w})}{2(1-\rho^{2})}.\label{eq:H-exact}
\end{align}
\end{prop}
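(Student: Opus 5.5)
The plan is to substitute the optimal-transport representation \eqref{eq:transport-cross} into the explicit Gaussian log-density \eqref{eq:gaussian-cross}. We then compute the supremum over couplings in \eqref{eq:max-cross} directly.

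Fix $w$ and a coupling $Q\in\mathcal{C}(P_{X^{n}|w},P_{Y^{n}|w})$. By \eqref{eq:gaussian-cross},
\[
\E_{Q}\log\frac{1}{\pi^{\otimes n}_{\rho}}=n\log(2\pi\sqrt{1-\rho^{2}})+\frac{\E|X^{n}|^{2}+\E|Y^{n}|^{2}-2\rho\,\E_{Q}\langle X^{n},Y^{n}\rangle}{2(1-\rho^{2})}.
\]
The first two expectations depend only on the marginals:
\[
\E|X^{n}|^{2}=|a_{w}|^{2}+\tr A_{w},\qquad \E|Y^{n}|^{2}=|b_{w}|^{2}+\tr B_{w}.
\]
Since $\rho\geq0$, taking the supremum over $Q$ amounts to taking the infimum of $\E_{Q}\langle X^{n},Y^{n}\rangle$. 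The previous lemma gives this infimum exactly. (When $\rho=0$ the inner-product term vanishes and the formula holds trivially.)

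Substituting gives the numerator
\[
|a_{w}|^{2}+\tr A_{w}+|b_{w}|^{2}+\tr B_{w}-2\rho a_{w}\cdot b_{w}-\rho W^{2}_{2}(\mu^{0}_{w},-\nu^{0}_{w})+\rho(\tr A_{w}+\tr B_{w}).
\]
Collecting the trace terms into $(1+\rho)(\tr A_{w}+\tr B_{w})$ yields exactly \eqref{eq:H-exact}.

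The main point needing care is not the algebra but well-definedness. Two issues arise.

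\textbf{Finite second moments.} These are needed so that $a_w,b_w,A_w,B_w$ and $W_2$ are finite and the split of the expectation is legitimate. The conditional laws have finite second moments for every $w$ with $P_W(w)>0$, because
\[
P_W(w)\,\E[|X^{n}|^{2}\mid W=w]\leq\E|X^{n}|^{2}=n<\infty,
\]
and similarly for $Y^{n}$.

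\textbf{Infimum versus minimum.} The infimum in the transport lemma is attained, since $W_2$ optimal couplings exist. If one prefers, the supremum/infimum interchange needs only the infimum value, so attainment is not essential.

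I expect no genuine obstacle beyond these integrability remarks.
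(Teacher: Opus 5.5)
Your proposal is correct and follows the paper's own argument: substitute the optimal-transport identity \eqref{eq:transport-cross} into the Gaussian log-density \eqref{eq:gaussian-cross}, using $\rho\ge 0$ to turn the supremum over couplings into an infimum of $\E_Q\langle X^{n},Y^{n}\rangle$, and then collect the trace terms. Your integrability remark, that $P_W(w)\,\E[|X^{n}|^{2}\mid W=w]\le n$ gives finite conditional second moments, is a useful detail the paper leaves implicit.
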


\subsection{Fathi's Gaussian Transport Inequality}

We next need to control the Wasserstein term in \eqref{eq:H-exact}.
We use Fathi's Gaussian $T_{2}$ inequality. 
\begin{lem}[Fathi's Gaussian transport inequality \cite{Fathi2018SymmetrizedTalagrand}]
 Let $\mu,\nu$ be probability measures on $\R^{n}$ with finite
second moments, and let $\gamma=\mathcal{N}(\mathbf{0},I_{n}).$ If $\mu$
is centered, then
\[
W^{2}_{2}(\mu,\nu)\le2D(\mu\|\gamma)+2D(\nu\|\gamma).
\]
\end{lem}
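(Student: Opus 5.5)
This is the symmetrized Talagrand inequality of \cite{Fathi2018SymmetrizedTalagrand}, and I would prove it by combining Kantorovich duality, the Donsker--Varadhan variational formula, and the functional Santal\'o (Blaschke--Santal\'o) inequality. Throughout, $\gamma$ denotes the Gaussian measure, while $\gamma(x)=(2\pi)^{-n/2}e^{-|x|^{2}/2}$ denotes its density. The starting point is the dual form of quadratic transport:
\[
\frac{1}{2}W_{2}^{2}(\mu,\nu)=\sup_{\varphi}\left\{ \int\Bigl(\frac{|x|^{2}}{2}-\varphi\Bigr)d\mu+\int\Bigl(\frac{|y|^{2}}{2}-\varphi^{*}\Bigr)d\nu\right\} ,
\]
where the supremum runs over convex lower semicontinuous $\varphi$ and $\varphi^{*}$ is its Legendre transform.

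The first step bounds each integral by the entropy variational formula $\int f\,d\mu\le D(\mu\|\gamma)+\log\int e^{f}d\gamma$. Take $f=|x|^{2}/2-\varphi$ on the $\mu$ side and $g=|y|^{2}/2-\varphi^{*}$ on the $\nu$ side. In each case the Gaussian weight cancels the quadratic term, since $e^{f}\gamma(x)=(2\pi)^{-n/2}e^{-\varphi(x)}$. This gives
\[
\frac{1}{2}W_{2}^{2}(\mu,\nu)\le D(\mu\|\gamma)+D(\nu\|\gamma)+\sup_{\varphi}\log\Bigl[(2\pi)^{-n}\int e^{-\varphi}\,dx\int e^{-\varphi^{*}}\,dy\Bigr].
\]
It therefore suffices to show that the bracket is at most $1$ for every potential $\varphi$ that matters. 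This is exactly the functional Santal\'o inequality $\int e^{-\varphi}\int e^{-\varphi^{*}}\le(2\pi)^{n}$ (Artstein--Klartag--Milman, Lehec). That inequality holds only under a centering condition, for example that $e^{-\varphi}$ (or $e^{-\varphi^{*}}$) has barycenter $0$. Without such a condition it fails: translates of $\varphi$ can make the product arbitrarily large.

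The second step uses the hypothesis that $\mu$ is centered to enforce the centering condition. Replacing $\varphi$ by $\varphi_{z}(x)=\varphi(x)-\langle z,x\rangle$ leaves $\int\varphi\,d\mu$ unchanged because $\int x\,d\mu=0$. It does change the conjugate, which becomes $\varphi^{*}(y+z)$, so the $\nu$-integral is not invariant. I would therefore not translate the dual pair. Instead I would invoke Lehec's strengthened form of functional Santal\'o, which holds whenever $\int x\,e^{-\varphi(x)}dx=0$, or more flexibly whenever some point $z$ realizes the inequality for the pair $(\varphi,\varphi^{*})$, a Yao--Yao type partition statement. Concretely, I would first add to $\varphi$ a linear term $\langle z,x\rangle$ with $z$ chosen to put the barycenter of $e^{-\varphi}$ at the origin. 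I would then run the entropy bound of the first step with the test functions $f=|x|^{2}/2-\varphi_{z}$ and $g=|y|^{2}/2-\varphi_{z}^{*}$ themselves. These are still admissible in the Donsker--Varadhan formula. Because $\mu$ is centered, the $\mu$-side integral is unchanged, so the resulting lower bound on $\frac{1}{2}W_{2}^{2}$ is compared with the right quantity.

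I expect this second step to be the main obstacle: checking that the linear correction needed for Santal\'o costs nothing on both sides of the duality simultaneously. My first attempt would be to place the linear tilt on the side of the centered measure $\mu$, so that $\int\langle z,x\rangle\,d\mu=0$ absorbs it. If that fails, I would follow Fathi's own route, deriving the inequality directly from Lehec's version of the functional Santal\'o inequality in which centering of one of the measures in the Kantorovich problem is sufficient. After that, an approximation argument (truncation to compactly supported $\mu,\nu$ with finite entropy, then monotone limits) removes the regularity assumptions. Once the bracket is shown to be at most $1$, multiplying by $2$ yields $W_{2}^{2}(\mu,\nu)\le2D(\mu\|\gamma)+2D(\nu\|\gamma)$.
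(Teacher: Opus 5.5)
\paragraph{Gap in the second step.} The paper gives no proof of this lemma; it cites Fathi, whose argument combines the ingredients you name: Kantorovich duality, the Donsker--Varadhan bound, and Lehec's functional Santal\'o inequality. Your first step is correct. Your second step, which carries the whole proof, fails as written.

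Write $J(\varphi):=\int(|x|^2/2-\varphi)\,d\mu+\int(|y|^2/2-\varphi^*)\,d\nu$, so that $\frac12W_2^2=\sup_\varphi J(\varphi)$. You run the entropy bound with $f=|x|^2/2-\varphi_z$ \emph{and} $g=|y|^2/2-\varphi_z^*$. The left-hand side is then $J(\varphi_z)$, not $J(\varphi)$, because the $\nu$-term becomes $\int\varphi^*(y+z)\,d\nu$. This is exactly the non-invariance you noted yourself.

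Bounding $J$ only on potentials with $e^{-\varphi}$ centered does not bound the supremum. Take $\mu=\gamma$ and $\nu=\mathcal{N}(m,I_n)$. The optimal potential is $\varphi=|x|^2/2+\langle m,x\rangle$, with $J(\varphi)=|m|^2/2$. Centering $e^{-\varphi}$ produces $\varphi_z=|x|^2/2$, with $J(\varphi_z)=0$. Your bound then only gives $0\le D(\mu\|\gamma)+D(\nu\|\gamma)$, while the target inequality is tight in this example. Deferring to ``Fathi's own route'' is a citation, not a proof.

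\paragraph{The missing observation.} Tilt only the $\mu$-side test function and leave $g=|y|^2/2-\varphi^*$ \emph{unchanged}.
\begin{itemize}
\item Put $\varphi_z=\varphi-\langle z,\cdot\rangle$. For any $z$, centering of $\mu$ gives $\int(|x|^2/2-\varphi)\,d\mu=\int(|x|^2/2-\varphi_z)\,d\mu$. Donsker--Varadhan applied to this $f$ and to the untouched $g$ bounds $J(\varphi)$ itself, with bracket $(2\pi)^{-n}\int e^{-\varphi_z}\int e^{-\varphi^*}$.
\item Since $\varphi_z^*=\varphi^*(\cdot+z)$, translation invariance of Lebesgue measure gives $\int e^{-\varphi^*}=\int e^{-\varphi_z^*}$. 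So for every $z$ the bracket is exactly the Santal\'o product of the pair $(\varphi_z,\varphi_z^*)$.
\item Choose $z$ to be the barycenter of $e^{-\varphi^*}$. This centers $e^{-\varphi_z^*}$ by a mere translation, which is always possible. Then apply Lehec's form of functional Santal\'o: for $F(x)G(y)\le e^{-\langle x,y\rangle}$, centering of either $F$ or $G$ suffices.
\end{itemize}
Once $g$ is left unchanged, your own choice of centering $e^{-\varphi_z}$ would also work whenever such a $z$ exists.

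\paragraph{Cleanest implementation.} Run the duality over bounded continuous $u,v$ with $u(x)+v(y)\le|x-y|^2/2$, and set $\varphi=|x|^2/2-u$, $\psi=|y|^2/2-v$. Use the pair $e^{-\varphi_z(x)}e^{-\psi(y+z)}\le e^{-\langle x,y\rangle}$ in place of a Legendre pair. All exponential integrals and barycenters are then finite, and your final approximation step is unnecessary.
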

More generally, applying the scaling transformation $X\mapsto\sqrt{t}X$
to the underlying random variables for all $\mu,\nu,\gamma$ (and
hence scaling their distributions accordingly), we obtain
\begin{equation}
W^{2}_{2}(\mu,\nu)\le2t\big[D(\mu\|\gamma_{t})+D(\nu\|\gamma_{t})\big],\label{eq:}
\end{equation}
where $\gamma_{t}=\mathcal{N}(0,tI_{n}).$

This scaling is exactly what we need. We fix temporarily $t\in(0,\frac{1-\rho^{2}}{\rho})$
(and lastly, will choose $t=1-\rho$). Since $\mu^{0}_{w}$ and $-\nu^{0}_{w}$
are centered, applying \eqref{eq:} to them yields
\begin{equation}
W^{2}_{2}(\mu^{0}_{w},-\nu^{0}_{w})\leq S_{w}-2tH_{w}+2kt\log(2\pi t),\label{eq:W2-upper}
\end{equation}
where 
\[
S_{w}=\tr A_{w}+\tr B_{w}
\]
and 
\[
H_{w}=h(X^{n}|w)+h(Y^{n}|w).
\]

Substituting \eqref{eq:W2-upper} into \eqref{eq:H-exact} yields
\begin{align}
\mathcal{H}^{(n)}_{\rho}(P_{X^{n}|w},P_{Y^{n}|w})\geq & n\log(2\pi\sqrt{1-\rho^{2}})+\frac{|a_{w}|^{2}+|b_{w}|^{2}+S_{w}-2\rho a_{w}\cdot b_{w}}{2(1-\rho^{2})}\nonumber \\
 & +\frac{\rho t}{1-\rho^{2}}H_{w}-\frac{\rho kt}{1-\rho^{2}}\log(2\pi t).\label{eq:H-lower}
\end{align}

Consequently, the contribution of $w$ to $\Gamma_{n}(\rho)$, denoted
by $g$, satisfies 
\begin{align}
g(w)\geq{} & n\log(2\pi\sqrt{1-\rho^{2}})+\frac{|a_{w}|^{2}+|b_{w}|^{2}+S_{w}-2\rho a_{w}\cdot b_{w}}{2(1-\rho^{2})}\nonumber \\
 & +\left(-1+\frac{\rho t}{1-\rho^{2}}\right)H_{w}-\frac{\rho kt}{1-\rho^{2}}\log(2\pi t).\label{eq:g-lower}
\end{align}

Since $(X^{n},Y^{n})\sim\pi^{\otimes n}_{\rho},$ we have $\E|X^{n}|^{2}=\E|Y^{n}|^{2}=n,$
and $\E[X^{n}\cdot Y^{n}]=n\rho.$ By the law of total covariance,
\[
I_{n}=\E[A_{W}]+\Cov(a_{W}),
\]
and therefore 
\[
\E\left[\tr A_{W}+|a_{W}|^{2}\right]=n.
\]
Similarly, 
\[
\E\left[\tr B_{W}+|b_{W}|^{2}\right]=n.
\]
Finally, conditional independence gives $\Cov(X^{n},Y^{n})=\Cov(a_{W},b_{W}),$
so $\E[a_{W}\cdot b_{W}]=n\rho.$

Hence,
\begin{align*}
 & \E\left[|a_{W}|^{2}+|b_{W}|^{2}+S_{W}-2\rho a_{W}\cdot b_{W}\right]=2k(1-\rho^{2}).
\end{align*}

Averaging \eqref{eq:g-lower} therefore gives 
\begin{align}
\Gamma_{n}(\rho)\geq & n\log(2\pi\sqrt{1-\rho^{2}})+n+\left(-1+\frac{\rho t}{1-\rho^{2}}\right)\E H_{W}-\frac{\rho kt}{1-\rho^{2}}\log(2\pi t).\label{eq:Gamma-t}
\end{align}

At this point, all source-specific information has been reduced to
an upper bound on the conditional entropy $\E H_{W}.$

\subsection{Entropy Bound from the Conditional Covariance }

Define 
\[
U=\Cov(\E[X^{n}|W])=\Cov(a_{W}),
\]
and 
\[
V=\Cov(\E[Y^{n}|W])=\Cov(b_{W}).
\]
Since $\Cov(X^{n},Y^{n})=\rho I_{n}$ and conditional independence
gives $\Cov(X^{n},Y^{n})=\Cov(a_{W},b_{W}),$ we obtain $\Cov(a_{W},b_{W})=\rho I_{n}.$
Thus,
\begin{equation}
\begin{pmatrix}U & \rho I_{n}\\
\rho I_{n} & V
\end{pmatrix}\succeq0.\label{eq:block-positive}
\end{equation}

Moreover, 
\[
I_{n}-U=\E\Cov(X^{n}|W)\succeq0,
\]
and similarly 
\[
I_{n}-V\succeq0.
\]

We now prove the determinant inequality needed for the entropy estimate.
\begin{lem}
For $0<\rho<1$, 
\[
\det(I_{n}-U)\det(I_{n}-V)\leq(1-\rho)^{2k}.
\]
\end{lem}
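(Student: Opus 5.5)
The plan is to reduce the matrix inequality to a scalar one by using the Schur complement of the block matrix \eqref{eq:block-positive}, then diagonalize $U$. (Here $k$ denotes the dimension $n$.)

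\emph{Step 1: $U$ is invertible with spectrum in $[\rho^{2},1]$.} From \eqref{eq:block-positive}, for any vector $z$ and any scalar $s$, testing the block matrix against $(z, s z)$ gives
\[
z^{\mathsf T}Uz+2\rho s|z|^{2}+s^{2}z^{\mathsf T}Vz\geq0 .
\]
Using $V\preceq I_{n}$, this yields $z^{\mathsf T}Uz+2\rho s|z|^{2}+s^{2}|z|^{2}\geq0$ for all $s$. Choosing $s=-\rho$ gives $z^{\mathsf T}Uz\geq\rho^{2}|z|^{2}$. Hence $\rho^{2}I_{n}\preceq U\preceq I_{n}$, and in particular $U\succ0$ since $\rho>0$.

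\emph{Step 2: Schur complement.} Since $U\succ0$, \eqref{eq:block-positive} is equivalent to $V-\rho^{2}U^{-1}\succeq0$, i.e. $V\succeq\rho^{2}U^{-1}$. Therefore
\[
0\preceq I_{n}-V\preceq I_{n}-\rho^{2}U^{-1}.
\]
The right-hand side is automatically PSD because $U\succeq\rho^{2}I_{n}$. By monotonicity of the determinant on the PSD cone (if $0\preceq M\preceq N$ then $\det M\leq\det N$), we get
\[
\det(I_{n}-V)\leq\det(I_{n}-\rho^{2}U^{-1}).
\]
Since $\det(I_{n}-U)\geq0$, this gives
\[
\det(I_{n}-U)\det(I_{n}-V)\leq\det(I_{n}-U)\det(I_{n}-\rho^{2}U^{-1}).
\]

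\emph{Step 3: scalar optimization.} Diagonalize $U$ with eigenvalues $u_{1},\dots,u_{n}\in[\rho^{2},1]$. The right-hand side then equals $\prod_{i}f(u_{i})$, where
\[
f(u)=(1-u)\Bigl(1-\frac{\rho^{2}}{u}\Bigr)=1+\rho^{2}-u-\frac{\rho^{2}}{u}.
\]
On $[\rho^{2},1]$ we have $f(u)\geq0$. By AM--GM, $u+\rho^{2}/u\geq2\rho$, so $f(u)\leq(1-\rho)^{2}$, with equality at $u=\rho$. Multiplying over $i$ gives the bound $(1-\rho)^{2n}$, which is the claim.

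The only delicate point is Step 1, which guarantees that $U$ is invertible so the Schur complement can be used and that every factor is nonnegative. Everything after that is routine.
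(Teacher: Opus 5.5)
Your proof is correct and follows the paper's argument: the Schur complement of the block matrix gives $I_n-V\preceq I_n-\rho^2U^{-1}$, then determinant monotonicity, then the scalar bound $(1-u)(1-\rho^2/u)\le(1-\rho)^2$ on $[\rho^2,1]$. The paper writes the gap in that scalar bound as $(u-\rho)^2/u$, which is your AM--GM step. The one small difference is in the first step: you get $U\succeq\rho^2 I_n$, and hence invertibility, directly from the test vector $(z,-\rho z)$ together with $V\preceq I_n$. The paper instead asserts $U\succ0$ from block positivity and only afterwards derives $u_i\ge\rho^2$ from $\rho^2U^{-1}\preceq V\preceq I_n$, so your version makes the invertibility justification explicit.
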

\begin{IEEEproof}
For $\rho>0$, the positive semidefiniteness in \eqref{eq:block-positive}
implies that $U$ is positive definite. Taking the Schur complement
of $U$, $V-\rho^{2}U^{-1}\succeq0.$ Hence $V\succeq\rho^{2}U^{-1}.$
Since $V\preceq I_{n}$, we have $\rho^{2}U^{-1}\preceq I_{n}.$ Thus
every eigenvalue $u_{i}$ of $U$ satisfies
\begin{equation}
\rho^{2}\leq u_{i}\leq1.\label{eq:-3}
\end{equation}

From $V\succeq\rho^{2}U^{-1}$, we obtain $I_{n}-V\preceq I_{n}-\rho^{2}U^{-1}.$
Therefore,
\[
\det(I_{n}-V)\leq\det(I_{n}-\rho^{2}U^{-1}).
\]

Let $u_{1},\dots,u_{n}$ be the eigenvalues of $U$. Then,
\begin{equation}
\det(I_{n}-U)\det(I_{n}-V)\le\det(I_{n}-U)\det(I_{n}-\rho^{2}U^{-1})=\prod^{n}_{i=1}(1-u_{i})\left(1-\frac{\rho^{2}}{u_{i}}\right).\label{eq:-2}
\end{equation}

For every $u\in[\rho^{2},1]$, 
\[
(1-\rho)^{2}-(1-u)\left(1-\frac{\rho^{2}}{u}\right)=\frac{(u-\rho)^{2}}{u}\geq0.
\]
Consequently, 
\[
(1-u_{i})\left(1-\frac{\rho^{2}}{u_{i}}\right)\leq(1-\rho)^{2}
\]
for every $i$, which gives 
\[
\det(I_{n}-U)\det(I_{n}-V)\leq(1-\rho)^{2k}.
\]
\end{IEEEproof}
\begin{lem}[Conditional entropy bound]
 For every exact Gaussian code, 
\[
\E H_{W}\leq n\log\bigl(2\pi e(1-\rho)\bigr).
\]
\end{lem}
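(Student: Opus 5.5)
The plan is to bound each conditional entropy by the Gaussian maximum-entropy inequality, average with Jensen's inequality (concavity of $\log\det$), and then apply the preceding determinant lemma. Here $k$ in the earlier displays is understood as $n$.

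\emph{Step 1: maximum-entropy bound.} For each $w$ with $P_W(w)>0$, the Gaussian maximum-entropy property gives
\[
h(X^{n}|W=w)\le\frac{1}{2}\log\bigl((2\pi e)^{n}\det A_{w}\bigr),\qquad
h(Y^{n}|W=w)\le\frac{1}{2}\log\bigl((2\pi e)^{n}\det B_{w}\bigr).
\]
This holds with the convention $\log 0=-\infty$ when $A_w$ or $B_w$ is singular. Adding the two bounds yields
\[
H_{w}\le n\log(2\pi e)+\frac{1}{2}\log\det A_{w}+\frac{1}{2}\log\det B_{w}.
\]

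\emph{Step 2: averaging over $W$.} Since $\log\det$ is concave on the positive semidefinite cone, Jensen's inequality gives
\[
\E\log\det A_{W}\le\log\det\E A_{W}.
\]
By the law of total covariance stated earlier, $\E A_{W}=I_{n}-U$, and similarly $\E B_{W}=I_{n}-V$. Therefore
\[
\E H_{W}\le n\log(2\pi e)+\frac{1}{2}\log\bigl[\det(I_{n}-U)\det(I_{n}-V)\bigr].
\]

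\emph{Step 3: determinant lemma.} The preceding lemma gives $\det(I_{n}-U)\det(I_{n}-V)\le(1-\rho)^{2n}$. Substituting this gives
\[
\E H_{W}\le n\log(2\pi e)+n\log(1-\rho)=n\log\bigl(2\pi e(1-\rho)\bigr),
\]
which is the claim.

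\emph{Main obstacle: technical justification.} The steps that need care are the following.
\begin{itemize}
\item The conditional densities must exist, which holds because $\pi^{\otimes n}_{\rho}$ dominates $P_W(w)p_{X^n|w}p_{Y^n|w}$. Their second moments must be finite, which follows from $\E|X^n|^2=n$ and holds for almost every $w$ with $P_W(w)>0$.
\item Jensen's inequality is applied over a possibly countably infinite discrete $W$. This is legitimate because $\log\det$ is concave and upper semicontinuous, and it is bounded above by $\log\det$ of a trace bound. This supplies the integrability needed to pass the expectation inside.
\item If some $\det A_w=0$, the corresponding term is $-\infty$ and the inequality holds trivially.
\end{itemize}
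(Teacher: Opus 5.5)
Your proposal is correct and follows the paper's proof step for step: the Gaussian maximum-entropy bound for each $w$, then Jensen via concavity of $\log\det$ with $\E A_W = I_n - U$ and $\E B_W = I_n - V$, then the determinant lemma (and you are right that the paper's $k$ should read $n$). Your added technical remarks, which the paper omits, are sound; the one small correction is that finiteness of the conditional second moments holds for every $w$ with $P_W(w)>0$, not just almost every such $w$, since $\E[|X^n|^2\mid W=w]\le n/P_W(w)$.
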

\begin{IEEEproof}
For each $w$, Gaussian maximal entropy gives 
\[
h(X^{n}|w)\leq\frac{n}{2}\log(2\pi e)+\frac{1}{2}\log\det A_{w}.
\]

Averaging and using concavity of $\log\det$, 
\[
\begin{aligned}h(X^{n}|W) & \leq\frac{n}{2}\log(2\pi e)+\frac{1}{2}\log\det\E[A_{W}]\\
 & =\frac{n}{2}\log(2\pi e)+\frac{1}{2}\log\det(I_{n}-U).
\end{aligned}
\]

Similarly, 
\[
h(Y^{n}|W)\leq\frac{n}{2}\log(2\pi e)+\frac{1}{2}\log\det(I_{n}-V).
\]

Therefore 
\begin{equation}
\E H_{W}\leq n\log(2\pi e)+\frac{1}{2}\log\left[\det(I_{n}-U)\det(I_{n}-V)\right].\label{eq:-1}
\end{equation}

Applying the determinant lemma, 
\[
\begin{aligned}\E H_{W} & \leq n\log(2\pi e)+n\log(1-\rho)=n\log\bigl(2\pi e(1-\rho)\bigr).\end{aligned}
\]
\end{IEEEproof}

\subsection{Optimization over the Transport Parameter}

Substituting the conditional entropy bound into \eqref{eq:Gamma-t},
we obtain 
\begin{align}
\Gamma_{n}(\rho)\geq{} & n\log(2\pi e\sqrt{1-\rho^{2}})\nonumber \\
 & +\left(-1+\frac{\rho t}{1-\rho^{2}}\right)n\log\bigl(2\pi e(1-\rho)\bigr)-\frac{\rho kt}{1-\rho^{2}}\log(2\pi t).\label{eq:Gamma-t-final}
\end{align}

For $0<\rho<1$, the derivative of the right-hand side with respect
to $t$ is 
\[
-\frac{\rho n}{1-\rho^{2}}\log\frac{t}{1-\rho}.
\]
Hence the unique maximizer is $t=1-\rho.$ Substituting $t=1-\rho$
gives 
\begin{equation}
\Gamma_{n}(\rho)\geq n\left[\frac{1}{2}\log\frac{1+\rho}{1-\rho}+\frac{\rho}{1+\rho}\right].\label{eq:Gamma-final}
\end{equation}

By the multi-letter converse and \eqref{eq:Gamma-final}, 
\[
C_{{\rm Exact}}(\pi_{\rho})\geq\limsup_{n\to\infty}\frac{\Gamma_{n}(\rho)}{n}\geq\frac{1}{2}\log\frac{1+\rho}{1-\rho}+\frac{\rho}{1+\rho}.
\]

\section{Proof of Theorem \ref{thm:GECS}}

The case $\rho=0$ is trivial, and thus, we only consider $\rho\in(0,1)$. 

\subsection{A Multi-letter Bound}

We now introduce the $n$-letter rate region: 
\begin{align}
\mathcal{R}_{n}(\pi_{\rho}) & :=\left\{ \begin{array}{rcl}
(R_{0},R) & : & \exists P_{W}P_{X^{n}|W}P_{Y^{n}|W}\textrm{ s.t. }\\
P_{X^{n}Y^{n}} & = & \pi^{\otimes n}_{\rho},\\
R & \ge & \frac{1}{n}I(W;X^{n}),\\
R+R_{0} & \ge & \frac{1}{n}\Gamma_{n}(P_{W},P_{X^{n}|W},P_{Y^{n}|W})
\end{array}\right\} ,\label{eq:UB-1}
\end{align}
where 
\begin{equation}
\Gamma_{n}(P_{W},P_{X^{n}|W},P_{Y^{n}|W}):=-h(X^{n}|W)-h(Y^{n}|W)+\E_{W}\mathcal{H}^{(n)}_{\rho}(P_{X^{n}|W},P_{Y^{n}|W}),\label{eq:Gamma-def-1}
\end{equation}
which is the objective function in \eqref{eq:Gamma-def}. We first
prove the fundamental converse.
\begin{prop}
For every exact $n$-letter channel synthesis code $(R_{0},P_{M|X^{n}K},P_{Y^{n}|MK})$
for $\pi^{\otimes n}_{\rho}$ and letting $W=(M,K)$, we have \textup{$X^{n}\text{ — }W\text{ — }Y^{n}$,}
$R\ge\frac{1}{n}I(W;X^{n})$, and $R_{0}+R\ge\frac{1}{n}\Gamma_{n}(P_{W},P_{X^{n}|W},P_{Y^{n}|W})$.
As a consequence, 
\[
\mathcal{R}_{\mathrm{Exact}}(\pi_{\rho})\subseteq\mathrm{cl}\bigcup_{n\ge1}\mathcal{R}_{n}(\pi_{\rho}).
\]
\end{prop}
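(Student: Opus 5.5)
We set $W=(M,K)$ and verify the three claims in turn, reusing the pointwise-density argument from Proposition \ref{prop:multiletterbound} for the sum-rate bound.

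\emph{Markov chain.} The joint law factorizes as $P_{X^{n}}P_{K}P_{M|X^{n}K}P_{Y^{n}|MK}$. Conditioned on $(M,K)=(m,k)$, $Y^{n}$ is drawn from $P_{Y^{n}|M=m,K=k}$ independently of $X^{n}$. Hence $X^{n}-W-Y^{n}$ holds, and $P_{X^{n}Y^{n}}=\pi^{\otimes n}_{\rho}$ by exactness. Since $\mathcal{K}$ is finite and $M$ is discrete, $W$ is discrete, so $(P_{W},P_{X^{n}|W},P_{Y^{n}|W})$ is a feasible decomposition in the definition of $\mathcal{R}_{n}(\pi_{\rho})$.

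\emph{Communication rate.} Because $K$ is independent of $X^{n}$, the chain rule gives
\[
I(W;X^{n})=I(K;X^{n})+I(M;X^{n}|K)=I(M;X^{n}|K)\le H(M|K)\le nR,
\]
which yields $R\ge\frac{1}{n}I(W;X^{n})$.

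\emph{Sum rate.} Here we rerun the argument of Proposition \ref{prop:multiletterbound} with the discrete variable $W$. For every $w$ with $P_{W}(w)>0$, the mixture representation gives the pointwise bound
\[
\pi^{\otimes n}_{\rho}\ge P_{W}(w)\,p_{X^{n}|w}\,p_{Y^{n}|w}.
\]
Integrating against an arbitrary coupling and taking the supremum gives
\[
-h(X^{n}|w)-h(Y^{n}|w)+\mathcal{H}^{(n)}_{\rho}(P_{X^{n}|w},P_{Y^{n}|w})\le-\log P_{W}(w).
\]
Averaging over $w$ yields $\Gamma_{n}(P_{W},P_{X^{n}|W},P_{Y^{n}|W})\le H(W)$. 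Finally,
\[
H(W)=H(K)+H(M|K)=nR_{0}+H(M|K)\le nR_{0}+nR,
\]
using $K\sim\mathrm{Unif}[e^{nR_{0}}]$. This gives $R_{0}+R\ge\frac{1}{n}\Gamma_{n}$.

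\emph{Conclusion.} Any $(R_{0},R)$ achievable at blocklength $n$ therefore lies in $\mathcal{R}_{n}(\pi_{\rho})$. Taking the union over $n$ and the closure on both sides gives $\mathcal{R}_{\mathrm{Exact}}(\pi_{\rho})\subseteq\mathrm{cl}\bigcup_{n}\mathcal{R}_{n}(\pi_{\rho})$.

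\emph{Technical points.} The main one is that the conditional laws $P_{X^{n}|W=w}$ and $P_{Y^{n}|W=w}$ must have densities for $h(\cdot|w)$ to be meaningful. This follows from the pointwise bound itself: each is absolutely continuous with respect to a marginal of $\pi^{\otimes n}_{\rho}$ with a bounded density ratio, since $P_{W}(w)\,p_{X^{n}|w}\le\pi^{\otimes n}_{X}$. Hence the entropies are finite or $-\infty$, and the inequality remains valid in the extended sense. Also, if $M$ has infinite support, $H(M|K)$ could be infinite, but then the rate bound is vacuous, so this causes no problem.
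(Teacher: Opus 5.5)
Your proof is correct and follows the paper's argument. You take $W=(M,K)$, reuse the pointwise-density bound of Proposition \ref{prop:multiletterbound} to get $\Gamma_{n}(P_{W},P_{X^{n}|W},P_{Y^{n}|W})\le H(W)=nR_{0}+H(M|K)\le n(R_{0}+R)$, and use the independence of $K$ and $X^{n}$ to get $I(X^{n};W)=I(X^{n};M|K)\le H(M|K)\le nR$. You also make explicit steps the paper leaves implicit, namely the Markov chain, $H(K)=nR_{0}$, and absolute continuity of the conditional laws; the bound $P_{W}(w)P_{X^{n}|W=w}\le\pi^{\otimes n}_{X}$ gives a bounded conditional density, so the conditional entropies are in fact finite rather than possibly $-\infty$.
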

\begin{IEEEproof}
By examining the proof of Proposition \ref{prop:multiletterbound},
\[
kR+kR_{0}\ge H(W)\ge-h(X^{n}|W)-h(Y^{n}|W)+\E\mathcal{H}^{(n)}_{\rho}(P_{X^{n}|W},P_{Y^{n}|W}).
\]
Moreover, 
\[
kR\ge H(M|K)\ge I(X^{n};M|K)=I(X^{n};M,K)=I(X^{n};W).
\]
 
\end{IEEEproof}

\subsection{Evaluation of the Multi-letter Bound}

Recall that 
\[
U=\Cov(\E[X^{n}|W])=\Cov(a_{W}),
\]
and 
\[
V=\Cov(\E[Y^{n}|W])=\Cov(b_{W}).
\]
Let $u_{i},1\le i\le n$ be eigenvalues of $U$. From \eqref{eq:-3},
we know $\rho^{2}\leq u_{i}\leq1.$

From \eqref{eq:-1} and \eqref{eq:-2}, we get 
\begin{align}
\E H_{W} & \leq n\log(2\pi e)+\frac{1}{2}\log\left[\det(I_{n}-U)\det(I_{n}-V)\right]\nonumber \\
 & =n\log(2\pi e)+\frac{1}{2}\sum^{n}_{i=1}\log\left[(1-u_{i})\left(1-\frac{\rho^{2}}{u_{i}}\right)\right].\label{eq:-4}
\end{align}

Moreover, we have 
\begin{align*}
I(X^{n};W) & =h(X^{n})-h(X^{n}|W)\\
 & \ge\frac{n}{2}\log(2\pi e)-\mathbb{E}\bigl[\frac{1}{2}\log\{(2\pi e)^{n}\det(\Cov(X^{n}\bigm|W))\}\bigr]\\
 & =-\frac{1}{2}\mathbb{E}\bigl[\log\det(A_{W})\bigr]\\
 & \ge-\frac{1}{2}\log\det(\mathbb{E}[A_{W}])\\
 & =-\frac{1}{2}\log\det(I_{n}-U)\\
 & =-\frac{1}{2}\sum^{n}_{i=1}\log(1-u_{i}),
\end{align*}
where the first inequality follows since Gaussian distributions maximize
entropy, the second inequality follows by the concavity of $\log\det$,
and the third equality follows by the law of total covariance. 

We next show that setting $u_{i},1\le i\le n$ to the same value
will lead to a further bound. 
\begin{lem}
\label{lem:same}Assume that $\rho^{2}\le u_{i}\le1$ for all $i$.
If 
\[
\frac{1}{n}\sum^{n}_{i=1}\log(1-u_{i})=\log(1-\alpha),
\]
then 
\[
\frac{1}{n}\sum^{n}_{i=1}\log\left(1-\frac{\rho^{2}}{u_{i}}\right)\le\log\left(1-\frac{\rho^{2}}{\alpha}\right).
\]
\end{lem}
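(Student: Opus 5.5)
The plan is a change of variables followed by Jensen's inequality. Set $s_{i}:=\log(1-u_{i})$ and $c:=1-\rho^{2}\in(0,1)$. Then $u_{i}=1-e^{s_{i}}$, and the constraint $\rho^{2}\le u_{i}\le1$ becomes $s_{i}\in[-\infty,\log c]$. The hypothesis says that the average of the $s_{i}$ equals $s^{*}:=\log(1-\alpha)$. Each summand on the left-hand side can be written as $\phi(s_{i})$, where
\[
\phi(s):=\log\left(1-\frac{\rho^{2}}{1-e^{s}}\right)=\log\left(c-e^{s}\right)-\log\left(1-e^{s}\right),
\]
and the right-hand side is exactly $\phi(s^{*})$. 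So the lemma reduces to the inequality $\frac{1}{n}\sum_{i}\phi(s_{i})\le\phi\bigl(\frac{1}{n}\sum_{i}s_{i}\bigr)$. This follows from Jensen's inequality once we show that $\phi$ is concave on $(-\infty,\log c)$.

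The main step is the concavity of $\phi$, which is a direct computation. For any constant $b>0$ and $s<\log b$ we have
\[
\frac{d^{2}}{ds^{2}}\log\left(b-e^{s}\right)=-\frac{b\,e^{s}}{(b-e^{s})^{2}}.
\]
Writing $x=e^{s}\in(0,c)$, this gives
\[
\phi''(s)=x\left[\frac{1}{(1-x)^{2}}-\frac{c}{(c-x)^{2}}\right].
\]
Hence $\phi''\le0$ is equivalent to $c(1-x)^{2}\ge(c-x)^{2}$. Expanding both sides, the terms $-2cx$ cancel, and the inequality becomes $c(1-c)\ge x^{2}(1-c)$, that is, $x^{2}\le c$. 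This holds because $x<c<1$. So $\phi$ is concave, and the averaged inequality follows.

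Finally, the boundary cases must be handled separately, since Jensen's inequality is only applied on the open interval.
\begin{itemize}
\item If some $u_{i}=\rho^{2}$, then $\phi(s_{i})=-\infty$, and the inequality is trivial.
\item If some $u_{i}=1$, then $s_{i}=-\infty$. The hypothesis then forces $\alpha=1$, so the right-hand side equals $\log(1-\rho^{2})$. Every term on the left satisfies $\log(1-\rho^{2}/u_{i})\le\log(1-\rho^{2})$ because $u_{i}\le1$, so the inequality again holds.
\end{itemize}
In all remaining cases every $s_{i}$ lies in the open interval $(-\infty,\log c)$, so $s^{*}$ does too and $\alpha\in(\rho^{2},1)$. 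I expect the only real obstacle to be choosing the right parametrization: in the variable $u$ itself the two sums do not combine directly, and it is the logarithmic variable $s=\log(1-u)$ that makes the concavity visible.
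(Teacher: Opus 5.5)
Your proposal is correct and follows the paper's proof essentially step for step: the same substitution $s_i=\log(1-u_i)$, the same function $\phi$ (the paper's $f$), the same second-derivative formula, and Jensen's inequality. You also show explicitly that $(c-x)^2\le c(1-x)^2$ reduces to $x^2\le c$, and you treat the boundary cases $u_i\in\{\rho^2,1\}$; both are details the paper leaves implicit.
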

\begin{IEEEproof}
Define $x_{i}=\log(1-u_{i}),$ so that the assumption is equivalent
to 
\[
\frac{1}{n}\sum^{n}_{i=1}x_{i}=\log(1-\alpha).
\]

Consider the function 
\[
f(x)=\log\left(1-\frac{\rho^{2}}{1-e^{x}}\right),
\]
defined for $x\le\log(1-\rho^{2})$. Since $u=1-e^{x}$, we have 
\[
f(\log(1-u))=\log\left(1-\frac{\rho^{2}}{u}\right).
\]

We first show that $h$ is concave. A direct calculation gives 
\[
f''(x)=e^{x}\left(\frac{1}{(1-e^{x})^{2}}-\frac{1-\rho^{2}}{(1-\rho^{2}-e^{x})^{2}}\right).
\]
Because $u=1-e^{x}\ge\rho^{2}$, we have $e^{x}\le1-\rho^{2}.$ Moreover,
\[
(1-\rho^{2}-e^{x})^{2}\le(1-\rho^{2})(1-e^{x})^{2},
\]
which implies 
\[
\frac{1}{(1-e^{x})^{2}}\le\frac{1-\rho^{2}}{(1-\rho^{2}-e^{x})^{2}}.
\]
Hence $f''(x)\le0,$ and therefore $f$ is concave.

By Jensen's inequality for concave functions, 
\[
\frac{1}{n}\sum^{n}_{i=1}f(x_{i})\le f\left(\frac{1}{n}\sum^{n}_{i=1}x_{i}\right).
\]
Using the definition of $x_{i}$ and the constraint, we obtain 
\[
\frac{1}{n}\sum^{n}_{i=1}\log\left(1-\frac{\rho^{2}}{u_{i}}\right)\le\log\left(1-\frac{\rho^{2}}{\alpha}\right).
\]
\end{IEEEproof}
Applying Lemma \ref{lem:same} yields that for some $\alpha$ such
that $\rho^{2}\le\alpha\le1$, it holds that 
\begin{align*}
I(X^{n};W) & \ge-\frac{1}{2}\log(1-\alpha),
\end{align*}
and 
\begin{align}
\frac{\E H_{W}}{n} & \leq\log(2\pi e)+\frac{1}{2}\log\left[(1-\alpha)\left(1-\frac{\rho^{2}}{\alpha}\right)\right].\label{eq:-5}
\end{align}

Analogously to \eqref{eq:Gamma-t}, we have 
\begin{align*}
R+R_{0}\geq & \log(2\pi e\sqrt{1-\rho^{2}})+\left(-1+\frac{\rho t}{1-\rho^{2}}\right)\frac{\E H_{W}}{n}-\frac{\rho t}{1-\rho^{2}}\log(2\pi t).
\end{align*}
Substituting \eqref{eq:-5} into this inequality yields
\begin{align}
R+R_{0}\geq{} & \log(2\pi e\sqrt{1-\rho^{2}})-\frac{\rho t}{1-\rho^{2}}\log(2\pi t)\nonumber \\
 & +\left(-1+\frac{\rho t}{1-\rho^{2}}\right)\left(\log(2\pi e)+\frac{1}{2}\log\left[(1-\alpha)\left(1-\frac{\rho^{2}}{\alpha}\right)\right]\right).\label{eq:Gamma-opt}
\end{align}

The optimal parameter $t$ is 
\[
t=\sqrt{(1-\alpha)\left(1-\frac{\rho^{2}}{\alpha}\right)}.
\]
Substituting it into \eqref{eq:Gamma-opt} yields the desired bound
\[
R+R_{0}\ge\frac{1}{2}\log\frac{1-\rho^{2}}{(1-\alpha)\left(1-\frac{\rho^{2}}{\alpha}\right)}+\frac{\rho\sqrt{(1-\alpha)\left(1-\frac{\rho^{2}}{\alpha}\right)}}{1-\rho^{2}}.
\]

\subsection*{Acknowledgements}

\emph{Generative-AI use disclosure:} During the preparation of this
manuscript, the author used ChatGPT (GPT-5.5-mini model, Think mode)
as an auxiliary tool for exploring proof ideas, checking calculations,
and improving exposition. All mathematical content, including statements,
proofs, and references, was independently verified by the author,
who assumes full responsibility for the final manuscript.

\bibliographystyle{unsrt}
\bibliography{ref}

\end{document}